\documentclass[mnsc,nonblindrev]{informs3-ssrn}
\RequirePackage{tgtermes}
\RequirePackage{newtxtext}
\RequirePackage{newtxmath}
\RequirePackage{bm}
\RequirePackage{endnotes}

\OneAndAHalfSpacedXI 

\usepackage[utf8]{inputenc}

\usepackage{algorithm}
\usepackage{algpseudocode}
\usepackage{amsmath}
\usepackage{amssymb}
\usepackage{subcaption}
\usepackage{array}
\usepackage{todonotes}
\usepackage{longtable}
\usepackage{tikz}
\usepackage{pgfplots}
\usepackage{titlesec}

\titlespacing*{\section}
{0pt}          
{1.5ex plus 1ex minus .2ex} 
{1.5ex plus .2ex}           

\usepackage{xcolor}

\usepackage{natbib}
 \bibpunct[, ]{(}{)}{,}{a}{}{,}%
 \def\bibfont{\small}%
\let\cite\citep

\EquationsNumberedThrough    

\TheoremsNumberedThrough     
\ECRepeatTheorems  %

\MANUSCRIPTNO{}

\begin{document}

\makeatletter
\renewcommand{\thefootnote}{\fnsymbol{footnote}}
\makeatother
\setcounter{footnote}{1}


\RUNAUTHOR{Umrawal, Stutsman, Woods, and Robson}

\RUNTITLE{How Social Reinforcement Shapes Word-of-Mouth: A Conviction-Based Model for Marketing Diffusion and Network Targeting}


\TITLE{How Social Reinforcement Shapes Word-of-Mouth: A Conviction-Based Model for Marketing Diffusion and Network Targeting}

\ARTICLEAUTHORS{%

\AUTHOR{Abhishek K. Umrawal\footnote{Abhishek K. Umrawal is the corresponding author.}}
\AFF{University of Illinois Urbana-Champaign, Urbana, Illinois 61801, USA \EMAIL{aumrawal@illinois.edu}}

\AUTHOR{Curt Stutsman}
\AFF{University of Illinois Urbana-Champaign, Urbana, Illinois 61801, USA \EMAIL{curtiss8@illinois.edu}}

\AUTHOR{Camille Woods}
\AFF{University of Illinois Urbana-Champaign, Urbana, Illinois 61801, USA \EMAIL{crwoods2@illinois.edu}}

\AUTHOR{Eliot W. Robson 
}
\AFF{Narmi Inc., New York City, New York 10016, USA \EMAIL{eliot.robson24@gmail.com}}
} 

\ABSTRACT{%
Organizations increasingly rely on social networks to drive product adoption and word-of-mouth diffusion, yet a central challenge is that consumers differ not only in whether they adopt but also in how strongly they subsequently advocate. In particular, individuals who adopt under strong social reinforcement often develop greater conviction and become more persuasive. Existing diffusion models largely abstract away from this mechanism by treating influence strength as static.
We introduce a conviction-based diffusion framework, the Pressure Threshold (PT) model, that captures how social reinforcement shapes both adoption and subsequent advocacy. In this model, an individual’s influence increases with the cumulative social pressure experienced at adoption, providing a tractable representation of conviction formation while preserving the interpretability of classical threshold-based models.
We study the managerial implications of this mechanism through marketing targeting and seeding. We show that accounting for reinforcement fundamentally alters optimal targeting strategies, shifting emphasis toward consumers in cohesive network regions where repeated exposure can amplify downstream word-of-mouth. While reinforcement introduces nonstandard features into the optimization problem, we demonstrate that scalable heuristics remain effective and provide data-dependent performance guarantees.
Across synthetic and real-world networks, we find that reinforcement often substantially increases the effectiveness of seeding interventions, particularly in dense and clustered environments where opportunities for social validation are abundant. Using time-stamped social media cascades, we further show that the PT model improves predictive accuracy relative to standard diffusion models, highlighting the empirical relevance of conviction-driven amplification.
}
\KEYWORDS{
word-of-mouth;
social influence;
conviction;
peer pressure;
social reinforcement;
marketing diffusion;
targeting and seeding;
network interventions;
influence maximization
}

\maketitle


\section{Introduction}\label{sec:Intro}

Word-of-mouth and social influence play a central role in the diffusion of products, beliefs, and behaviors \cite{bass1969new, goldenberg2001talk, rogers2003diffusion, aral2012influential, libai2013customer}. Consumers are not only influenced by their peers’ adoption decisions, but also by the intensity and consistency of social reinforcement they experience prior to adopting. In many settings, individuals who adopt after repeated exposure within their social network develop stronger conviction and, in turn, become more persuasive advocates to others. Yet most existing approaches implicitly assume that all adopters exert similar levels of influence, regardless of how their adoption was formed. This heterogeneity in post-adoption advocacy is critical for firms and platforms that rely on network-based interventions, such as targeted seeding, influencer marketing, and referral programs, to maximize downstream word-of-mouth and customer lifetime social value \cite{lobel2017customer, libai2013customer}. 

To study these processes, a large literature models social influence using threshold-based or contagion frameworks, in which individuals adopt when the aggregate influence from their neighbors exceeds a threshold \cite{granovetter1978threshold, kempe2003maximizing}. Building on these models, a central decision problem is influence maximization (IM), which seeks to identify a set of initial adopters (or “seed” nodes) that maximizes downstream diffusion \cite{domingos2001mining, kempe2003maximizing}. This framework has been widely used to guide marketing interventions such as targeted seeding, information campaigns, and referral programs.

The modern formalization of network-based diffusion expands on threshold and contagion models of social influence, including the Linear Threshold (LT) and Independent Cascade (IC) frameworks introduced by \citet{kempe2003maximizing}. These models extend earlier theories of collective behavior and social thresholds \citep{granovetter1978threshold} and connect to marketing perspectives on word-of-mouth diffusion and network-driven adoption \citep{goldenberg2001talk, aral2013social, lobel2017customer}. Their appeal lies in their conceptual clarity and tractability, which have made them the dominant models for studying how peer influence propagates through networks and for designing data-driven marketing interventions.

A growing body of empirical and theoretical work highlights the importance of social reinforcement in shaping diffusion outcomes. In many settings, repeated exposure within cohesive or homophilic communities increases both the likelihood of adoption and the intensity of subsequent engagement, consistent with the idea of complex contagion \citep{monsted_plosone2017}. In online environments, users tend to cluster into like-minded groups, and information transmission is disproportionately concentrated within these communities, amplifying beliefs and behaviors through repeated validation \citep{cinelli_pnas2021}. From a marketing perspective, such reinforcement is closely related to mechanisms of social proof and conformity, whereby individuals infer value or credibility from observing widespread adoption among peers. Further marketing and consumer behavior research suggests that the strength and persistence of word-of-mouth are shaped by social context, including peer validation, conformity pressures, and repeated exposure \citep{aral2020hype, pan2025influencer}.

These patterns imply that influence is not merely transmitted across network connections but can be amplified through local reinforcement. In particular, individuals who adopt after observing consistent support from their peers may develop stronger conviction and become more active or persuasive participants in subsequent word-of-mouth processes. This amplification mechanism is central to many real-world diffusion phenomena. Still, it is not explicitly represented in standard diffusion models which assume static influence exerted by an individual, independent of how their own adoption was shaped. As a result, while LT and IC models capture whether individuals adopt in response to social exposure, they do not capture how the process of adoption itself may shape subsequent advocacy behavior. This omission limits their ability to capture reinforcement-driven amplification observed in practice.


To address these limitations, we introduce the Pressure Threshold (PT) model, a tractable extension of the Linear Threshold framework that captures conviction formation through social reinforcement. In the PT model, an individual’s influence is not fixed but instead depends on the cumulative social pressure experienced at the time of adoption. Specifically, individuals who adopt after repeated or consistent exposure to their peers develop greater conviction and, as a result, exert stronger influence on others in subsequent interactions. This formulation provides a simple yet expressive way to operationalize reinforcement-driven amplification within a familiar diffusion framework. By allowing influence strength to evolve endogenously with the adoption process, the model captures a key behavioral mechanism underlying word-of-mouth dynamics while preserving the interpretability and transparency of threshold-based models. As such, the PT model bridges the gap between classical diffusion models and empirical observations of reinforcement, providing a foundation for studying how conviction formation shapes both diffusion patterns and the effectiveness of network-based marketing interventions.


By retaining the additive semantics and interpretability of the Linear Threshold model while making reinforcement explicit, the PT framework offers a principled and transparent way to move beyond static LT and IC assumptions. Importantly, this modeling approach is strongly supported by empirical evidence. As we demonstrate through empirical validation later in the paper, incorporating pressure-based amplification significantly improves the accuracy of diffusion predictions on real-world social networks. The PT model thus provides both a richer behavioral foundation for influence maximization and a practical tool for aligning diffusion models with observed online and offline social dynamics.

\subsection{Contributions}

This paper makes four main contributions that advance the modeling and analysis of social influence under reinforcement, with direct implications for marketing interventions in networks. Our contributions span modeling, theory, and managerial insights.

\begin{enumerate}

    \item \textbf{A Behavioral Model of Conviction-Driven Diffusion.}  
    We introduce the Pressure Threshold (PT) diffusion model, an extension of the Linear Threshold (LT) model that explicitly incorporates local social reinforcement. In PT, a node’s outgoing influence depends on the cumulative social pressure that triggered its activation, capturing the idea that individuals who adopt under stronger peer reinforcement develop greater conviction and become more influential in subsequent interactions. This provides a tractable and interpretable framework for modeling amplification in word-of-mouth processes while preserving the additive structure of threshold-based diffusion.

    \item \textbf{Implications for Targeting and Seeding Decisions.}  
    We study how reinforcement reshapes optimal intervention strategies in the influence maximization problem. We show that amplification fundamentally alters the structure of effective targeting, favoring coordinated seeding in cohesive regions of the network where repeated exposure can strengthen downstream influence. These results provide new insights for marketing strategies such as influencer selection, referral program design, and campaign coordination in networked environments.

    \item \textbf{Empirical Evidence on Amplification and Predictive Performance.}  
    Through experiments on synthetic and real-world networks, as well as validation on timestamped social media cascades, we show that reinforcement-driven diffusion produces substantially different, and consistently larger, cascades than standard LT models, particularly in dense or clustered networks. We further demonstrate that the PT model significantly improves diffusion prediction accuracy, indicating that explicitly modeling social pressure yields a more faithful representation of observed word-of-mouth dynamics.

    \item \textbf{Theoretical and Algorithmic Foundations for Reinforcement-Based Diffusion.}  
    We provide a structural analysis of influence maximization under PT, showing that the problem remains NP-hard and that the influence function is generally neither monotone nor submodular. Despite this departure from classical assumptions, we develop a Sandwich Approximation framework that enables the use of greedy methods with data-dependent guarantees, and demonstrate that scalable heuristics adapted from LT, such as CELF, remain effective in our empirical settings. We also support these analyses with an efficient implementation built on the CyNetDiff \cite{robson2024cynetdiff} library, enabling large-scale simulation and reproducibility.

\end{enumerate}

Together, these results show that incorporating social reinforcement leads to qualitatively different diffusion dynamics and targeting strategies than those predicted by static models, while remaining computationally tractable and practically implementable for network-based decision making.



\subsection{Organization}

The remainder of the paper is organized as follows. 
Section~\ref{sec:conceptual} develops the conceptual motivation and situates the paper within the related literature on reinforcement, persuasion, and amplification in social influence. 
Section~\ref{sec:model} introduces the Pressure Threshold diffusion model, provides intuition for its mechanism, and interprets the reinforcement parameter in behavioral and decision-relevant terms. 
Section~\ref{sec:decision} formulates influence maximization under pressure as a managerial decision problem and analyzes its structural properties, including NP-hardness and the failure of monotonicity and submodularity. 
Section~\ref{sec:model_properties} presents model behavior and develops the sandwich approximation framework, highlighting how greedy methods can be applied effectively despite weakened theoretical guarantees. 
Section~\ref{sec:experiments} reports computational experiments on synthetic and real-world networks, illustrating how reinforcement alters diffusion dynamics, seed selection, and intervention outcomes. 
Section~\ref{sec:greedy} provides an empirical evaluation of the sandwich approximation strategy and demonstrates data-dependent approximation guarantees in practice. 
Section~\ref{sec:validation} evaluates the predictive performance of the PT model using real-world social media cascades. 
Section~\ref{sec:implications} discusses managerial and societal implications for marketing, platform design, and information governance. 
Section~\ref{sec:conclusion} concludes with a summary of contributions and directions for future research.

\vspace{.1in} 
\section{Conceptual Motivation and Related Literature}\label{sec:conceptual}

Diffusion models have a long history across multiple disciplines, including marketing, sociology, and epidemiology, where they are used to understand how behaviors, products, and information spread through populations. In marketing, diffusion has long been studied as a central mechanism underlying product adoption and word-of-mouth dynamics \citep{bass1969new, goldenberg2001talk, rogers2003diffusion, libai2013customer}. More recent work emphasizes the role of social media, network interactions, and digital platforms in shaping diffusion outcomes and firm value \citep{aral2013social, lobel2017customer}. In epidemiology, the Susceptible–Infected–Recovered (SIR) model introduced a compartmental approach that clarified how local transmission mechanisms aggregate into population-level dynamics \citep{SIR}. In the social sciences, threshold-based models provided a complementary behavioral perspective. In particular, \citet{granovetter1978threshold} conceptualized adoption as a function of cumulative peer influence, highlighting how individual decision rules interact with network structure to produce aggregate diffusion patterns.

Building on these foundations, diffusion in networks has been widely modeled using threshold-based and cascade-based frameworks that formalize how peer influence propagates through social structures. Among the most prominent are the Linear Threshold (LT) and Independent Cascade (IC) models introduced by \citet{kempe2003maximizing}, which translate behavioral insights from sociology and marketing into tractable stochastic processes on networks. These models provide a unifying framework for studying how individual-level responses to social exposure generate aggregate diffusion patterns and have been extensively used to analyze word-of-mouth dynamics, social media influence, and network-based marketing interventions \citep{goldenberg2001talk, aral2013social, lobel2017customer}. Their appeal lies not only in analytical tractability but also in their interpretability: both LT and IC explicitly link adoption decisions to peer influence, making them well suited for studying how local interactions shape global outcomes. As a result, they have become the dominant modeling paradigms for diffusion and form the basis for much of the work on targeting, seeding, and intervention design in networked settings.

Within these models, a central decision problem is influence maximization (IM), which seeks to identify a set of initial adopters (or “seed” nodes) that maximizes downstream diffusion. First introduced in the context of viral marketing by \citet{domingos2001mining}, IM provides a formal framework for allocating marketing resources across individuals in a network to maximize word-of-mouth impact. The work of \citet{kempe2003maximizing} established the computational foundations of the problem by proving NP-hardness, identifying conditions under which the influence function is submodular, and introducing greedy approximation algorithms with provable guarantees.

Building on this foundation, a large and active literature has developed algorithms to improve the scalability and practical applicability of IM. These include algorithmic refinements such as Cost-Effective Lazy Forward (CELF) \citep{leskovec2007cost} and its extensions \citep{CELF++}, sampling-based approaches such as Reverse Influence Sampling (RIS) \citep{borgs2014maximizing}, lightweight heuristics such as degree discounting \citep{Degree_Discount}, and more recent advances in scalable implementations \citep{tang2015influence, borgs2014maximizing}. In parallel, researchers have explored extensions that incorporate network structure and learning, including community-aware heuristics \citep{umrawal2023leveraging, umrawal2023community, robson2025community} and adaptive or data-driven approaches \citep{kumar2022influence}.

Beyond the classical formulation, a wide range of variants have been studied to better capture real-world marketing and decision-making settings. These include models with partial or fractional incentives \citep{chen2020scalable,UmrawalAQ2023,umrawal2026maximizing,BhimarajuRVU2024}, online and adaptive influence maximization \citep{mallipeddi2022framework, agarwal2022stochastic,nie2022explore, xu2026lofa}, fairness-aware interventions \citep{becker2022fairness,lin2023fair,nguyen2022fairnessbudget}, and learning-based approximations of influence functions \citep{kumar2022influence}. Comprehensive surveys of these developments can be found in \citet{li2023surveyml}. Collectively, this body of work has significantly advanced the algorithmic treatment of influence maximization while largely retaining the static diffusion assumptions inherited from LT and IC.

In parallel, a separate line of work has sought to move beyond static diffusion models by incorporating temporal dynamics and evolving network structure. These approaches recognize that social systems are often nonstationary, with influence patterns and network connections changing over time. For example, \citet{Zhuang} studied diffusion processes with time-varying edges, while \citet{Xie} proposed DynaDiffuse, which combines continuous-time propagation with dynamically evolving influence weights. More broadly, continuous-time and dynamic network models have been used to capture temporal patterns in information spread and user interaction, particularly in social media and online platforms.

From a marketing perspective, such dynamic models are closely related to efforts to understand how timing, exposure frequency, and campaign scheduling affect consumer behavior and diffusion outcomes \citep{aral2013social, pan2025influencer, kanuri2018scheduling}. These models provide important insights into when and how influence propagates, particularly in settings where user activity and connections evolve. However, they primarily focus on temporal and structural variation in diffusion processes, rather than on how the experience of social reinforcement influences individual behavior.

In contrast, the Pressure Threshold model proposed in this paper focuses on a complementary dimension: how cumulative social pressure at the point of adoption shapes subsequent influence. While dynamic diffusion models relax assumptions about time and network structure, they generally retain a static notion of influence strength conditional on adoption. The PT model instead endogenizes influence strength through reinforcement, enabling the study of amplification effects that arise from repeated exposure and peer validation.

Several studies have examined the empirical accuracy of diffusion models by comparing simulated cascades with real-world diffusion data. \citet{Kuo} evaluated predictive performance across models and found that IC often performs better for direct prediction tasks, whereas LT exhibits advantages in indirect or aggregated comparisons. \citet{Ananthasubramaniam2024TheRO} analyzed Twitter hashtag propagation and showed that predictive accuracy depends strongly on both network topology and content characteristics, highlighting the sensitivity of diffusion outcomes to modeling assumptions.

From a marketing perspective, these findings underscore the importance of selecting diffusion models that accurately reflect underlying behavioral mechanisms, particularly in settings where firms rely on predictive models to guide targeting and campaign design. Empirical work by \citet{aral2012influential} provides direct evidence of cumulative social influence in product adoption, supporting threshold-based representations of peer effects. More recent work in influencer marketing further demonstrates the importance of repeated exposure and reinforcement in driving engagement and diffusion outcomes. For example, \citet{mallipeddi2022framework} show using large-scale social media data that multiple exposures lead to non-linear, diminishing returns in engagement and that message timing and spacing significantly shape diffusion outcomes. More broadly, research on social media and influencer marketing emphasizes that exposure, repetition, and peer validation play a central role in shaping engagement and word-of-mouth dynamics \citep{aral2013social, pan2025influencer}.

Taken together, this evidence suggests that the predictive performance of diffusion models depends critically on how they capture user behavior and social influence. While existing models account for exposure and network structure, they typically do not incorporate reinforcement effects that arise from repeated or consistent peer influence. This limitation is particularly important in modern online and marketing contexts, where repeated exposure and peer validation are central to both adoption and subsequent influence.

The Pressure Threshold model introduced in this paper addresses this gap by incorporating local social reinforcement directly into the diffusion process. By allowing influence strength to depend on cumulative social pressure at the time of activation, the PT framework provides a tractable and interpretable way to model conviction formation and amplification, while remaining compatible with influence maximization analysis and empirical validation.

\vspace{.1in} \section{Pressure Threshold Diffusion Model}
\label{sec:model}

This section introduces the Pressure Threshold (PT) diffusion model and formalizes the mechanism by which local social reinforcement shapes influence propagation. Our goal is to provide a transparent and behaviorally interpretable model that captures how conviction formed through peer pressure can amplify downstream influence, while remaining closely connected to classical threshold-based diffusion frameworks. This mechanism is motivated by empirical evidence from social media and marketing contexts, where repeated exposure, peer validation, and cumulative persuasion are known to shape both adoption decisions and subsequent word-of-mouth influence. In particular, recent studies of online engagement suggest that individuals exposed to consistent and reinforcing social signals are not only more likely to adopt but also more likely to actively propagate information.

We begin by reviewing the Linear Threshold (LT) model as a baseline, since PT generalizes its structure and semantics. We then define the PT model, describe its diffusion dynamics in detail, and interpret its key parameters in terms relevant for social and managerial decision contexts. Throughout, we emphasize interpretability and behavioral plausibility, positioning PT as a mechanism for studying reinforcement-driven diffusion prior to any consideration of optimization or algorithmic design.

The Linear Threshold (LT) model occupies a central role in the study of influence diffusion because it provides a parsimonious abstraction of cumulative social influence. In the LT framework, individuals adopt a behavior once the total influence exerted by their neighbors exceeds an individual-specific threshold, reflecting classic theories of collective behavior and social pressure \citep{granovetter1978threshold}. This additive formulation has proven valuable both for its interpretability and for its analytical properties, forming the basis for much of the modern influence maximization literature \citep{kempe2003maximizing}.

Beyond its conceptual clarity, the LT model has become a standard reference point for algorithmic analysis because it cleanly separates network structure from behavioral response. Influence is encoded through fixed edge weights, while heterogeneity in adoption behavior is captured through exogenous thresholds. This structure enables rigorous study of diffusion dynamics and intervention strategies, and it underpins many of the approximation guarantees developed for influence maximization.

At the same time, the LT model treats influence strength as static and does not account for the possibility that individuals who adopt under strong peer pressure may subsequently become more influential advocates. As discussed in Section~\ref{sec:conceptual}, empirical evidence from online and offline social systems suggests that local reinforcement and social validation can substantially amplify downstream influence. These patterns are especially salient in modern digital environments, where repeated exposure and coordinated messaging can strengthen conviction and increase the likelihood of subsequent word-of-mouth propagation. These observations motivate extensions of the LT framework that preserve its interpretability while relaxing its static influence assumption.

We next present the formal definition of the Linear Threshold model, which establishes notation and serves as the baseline against which the Pressure Threshold model is defined.

\subsection{Formal Definition: Linear Threshold Model}

Given that the Pressure Threshold model generalizes the Linear Threshold (LT) model, we begin by formally defining the LT diffusion process. Let $G = (V,E)$ be a directed graph, where $V$ is the set of nodes and $E \subseteq V \times V$ is the set of directed edges. Each edge $(u,v) \in E$ is assigned a weight $w_{uv} \in [0,1]$, representing the influence exerted by node $u$ on node $v$. The weights satisfy the normalization condition
\[
\sum_{u \in N(v)} w_{uv} \leq 1 \quad \text{for each } v \in V,
\]
where $N(v)$ denotes the set of in-neighbors of node $v$.

Each node $v \in V$ is also assigned an activation threshold $\theta_v$, drawn independently and uniformly at random from the interval $(0,1]$. Intuitively, $\theta_v$ represents the minimum cumulative influence required for node $v$ to adopt the behavior.

Diffusion proceeds in discrete time steps. Let $S \subseteq V$ denote an initial seed set of active nodes, and let $A_t \subseteq V$ denote the set of active nodes at time $t$, with $A_0 = S$. At each subsequent time step, an inactive node $v \in V \setminus A_t$ becomes active if the cumulative influence from its active in-neighbors meets or exceeds its threshold:
\[
\sum_{u \in P(v)} w_{uv} \geq \theta_v,
\]
where
\[
P(v) = \{\, u \in A_t : (u,v) \in E \,\}
\]
denotes the set of currently active in-neighbors of $v$. Once a node becomes active, it remains active permanently. The diffusion process terminates when no additional nodes become active.

For clarity, we consistently distinguish between the static set of in-neighbors $N(v)$ and the time-dependent subset $P(v)$ of in-neighbors that are active at time $t$.

We define the influence of a seed set $S \subseteq V$ under the Linear Threshold model as the expected number of nodes activated by the diffusion process initiated from $S$, where the expectation is taken over the random thresholds $\theta_v \in (0,1]$. More generally, for any diffusion model $M$, we denote the expected influence of $S$ by $\sigma_M(S)$.

The influence maximization problem under a diffusion model $M$ is then defined as follows. Given a directed graph $G = (V,E)$ and a budget $k$, select a seed set $S \subseteq V$ with $|S| \le k$ that maximizes expected influence:
\[
\arg\max_{S \subseteq V} \ \sigma_M(S)
\quad \text{subject to} \quad |S| \le k.
\]
This formulation provides a direct connection between diffusion dynamics and strategic intervention, and it serves as the decision problem studied throughout the remainder of the paper.

While the Linear Threshold model provides a clean and interpretable framework for modeling cumulative social influence, it assumes that influence strength is static throughout the diffusion process. In particular, once a node becomes active, the influence it exerts on its neighbors is fixed by the underlying network structure and does not depend on the circumstances under which the node itself was adopted.

A growing body of empirical evidence suggests that this assumption may be overly restrictive in many social systems. Individuals who adopt a behavior after experiencing strong peer pressure or broad social support often develop greater conviction and become more persuasive advocates in turn. These effects are especially pronounced in modern digital and marketing environments, where repeated exposure, coordinated messaging, and peer validation can significantly amplify downstream influence. Static diffusion models such as LT do not capture this feedback between the pressure that triggers adoption and the influence exerted thereafter.

The Pressure Threshold (PT) model relaxes this static influence assumption while preserving the core structure and interpretability of the LT framework. Rather than modifying the activation rule itself, PT introduces a localized reinforcement mechanism that adjusts a node’s outgoing influence at the moment it becomes active, based on the cumulative social pressure it experienced. This design ensures that reinforcement arises as a consequence of adoption, rather than as an independent driver of activation.

By making reinforcement explicit but minimal, the PT model maintains the additive semantics of threshold-based diffusion, avoids recursive feedback loops, and remains compatible with efficient simulation and empirical calibration. We next present the formal definition of the PT diffusion process, beginning with the shared activation rule and then specifying how influence is updated through a two-phase mechanism.

While the activation rule in the Pressure Threshold model is identical to that of the Linear Threshold model, PT departs from LT in how influence evolves after a node becomes active. To capture reinforcement in a controlled and interpretable manner, the PT diffusion process is organized into two distinct phases at each time step.

\paragraph{Two-phase diffusion structure.}
Each round of the PT diffusion process consists of the following phases:

\begin{enumerate}
    \item \textbf{Activation Phase.}  
    All inactive nodes evaluate their cumulative incoming influence based on the current edge weights. Nodes whose cumulative influence meets or exceeds their threshold become newly active, exactly as in the LT model.

    \item \textbf{Influence Adjustment Phase.}  
    Nodes that activate during the current round update the influence they exert on their neighbors as a function of the social pressure they experienced at activation.
\end{enumerate}

This separation emphasizes that reinforcement in PT is a consequence of adoption rather than a driver of it. Influence is first evaluated using the existing network weights, and only after adoption does the model allow a node’s outgoing influence to change. By design, this prevents reinforcement from affecting activation decisions within the same time step and avoids recursive feedback effects that could otherwise complicate diffusion dynamics.

\paragraph{Incoming pressure at activation.}
To formalize reinforcement, we first define the total incoming influence received by a node at the moment it becomes active. Let $v$ be a node that activates at time $t$. We define the \emph{incoming pressure} experienced by $v$ at activation as
\[
I_v = \sum_{u \in P(v)} w_{uv},
\]
where $P(v) = \{\, u \in A_t : (u,v) \in E \,\}$ denotes the set of active in-neighbors of $v$ at the time of activation.

The quantity $I_v$ summarizes the cumulative social influence that triggered node $v$’s adoption. In the PT model, this pressure plays a central role in determining how strongly $v$ influences others after becoming active. Importantly, $I_v$ is evaluated using the pre-activation edge weights and is computed only once, at the time of activation.

\subsection{Formal Definition: Pressure Threshold Diffusion Process}

We now formally define the Pressure Threshold (PT) diffusion model. The PT model is defined on the same underlying graph structure as the Linear Threshold model and retains the same threshold-based activation rule. As a result, PT preserves the core semantics of cumulative influence while modifying how influence evolves after activation through a reinforcement mechanism tied to experienced social pressure.

Let $G = (V,E)$ be a directed graph, where $V$ denotes the set of nodes and $E \subseteq V \times V$ denotes the set of directed edges. Each edge $(u,v) \in E$ is assigned a weight $w_{uv} \in [0,1]$ such that
\[
\sum_{u \in N(v)} w_{uv} \leq 1 \quad \text{for each } v \in V,
\]
where $N(v)$ denotes the set of in-neighbors of $v$. Unless otherwise stated, we initialize weights according to the weighted-cascade convention, $w_{uv} = 1/\text{in-degree}(v)$. Each node $v \in V$ is assigned an activation threshold $\theta_v \in (0,1]$, drawn independently and uniformly at random.

Diffusion unfolds in discrete time steps and begins from an initial seed set $S \subseteq V$, which is assumed to be active at time $t=0$. Let $A_t \subseteq V$ denote the set of active nodes at time $t$, with $A_0 = S$. At each subsequent time step, inactive nodes evaluate whether the cumulative influence from their active in-neighbors meets or exceeds their threshold.

Formally, at time step $t+1$, a node $v \in V \setminus A_t$ becomes active if
\[
\sum_{u \in P(v)} w_{uv} \geq \theta_v,
\]
where $P(v) = \{\, u \in A_t : (u,v) \in E \,\}$ denotes the set of currently active in-neighbors of $v$. As in the LT model, the PT diffusion process is progressive: once a node becomes active, it remains active permanently. Diffusion continues until no additional nodes activate.

\paragraph{Influence amplification.}
The PT model translates incoming pressure at activation into increased outgoing influence through a simple additive update rule. Let $v$ be a node that becomes active at time $t$, and let $I_v$ denote the total incoming influence received at activation, as defined above. The strength of reinforcement is governed by a nonnegative amplification parameter $\alpha \geq 0$. This parameter captures the degree to which individuals internalize social reinforcement when forming persuasive influence.

For each outgoing edge $(v,s) \in E$ such that the neighbor $s$ is inactive at time $t$, the influence weight from $v$ to $s$ is updated according to
\[
w'_{vs} = \min \bigl( 1,\; w_{vs} + \alpha \cdot I_v \bigr),
\]
where $w'_{vs}$ denotes the updated influence weight after reinforcement.

The use of the $\min(1,\cdot)$ operator ensures that all influence weights remain within the unit interval, consistent with the range of activation thresholds. Because the amplification is additive and proportional to the pressure that triggered activation, nodes that adopt after experiencing greater cumulative influence exert correspondingly stronger influence on their neighbors.

\paragraph{One-shot and localized reinforcement.}
Reinforcement in the PT model is applied only once, at the moment a node becomes active. After this update, the outgoing weights of node $v$ remain fixed for the remainder of the diffusion process. Moreover, influence adjustments are applied only to edges directed toward currently inactive neighbors; edges pointing to already active nodes are not updated, since additional influence along those edges would not affect future diffusion.

These restrictions are deliberate. By limiting reinforcement to a single update and restricting it to edges that can influence future activations, the PT model avoids recursive amplification loops and maintains stable diffusion dynamics even in dense networks. At the same time, the additive update rule preserves the interpretability and cumulative nature of threshold-based influence aggregation.

\paragraph{Influence under the PT model.}
As in the Linear Threshold model, the diffusion process under PT is progressive: nodes may transition from inactive to active, but never the reverse. Because reinforcement is applied only once at the moment of activation and no subsequent feedback occurs, the PT diffusion process is guaranteed to terminate in a finite number of steps.

We define the influence of a seed set $S \subseteq V$ under the PT model as the expected number of nodes activated by the diffusion process initiated from $S$, where the expectation is taken over the random thresholds. Consistent with our earlier notation, we denote this quantity by $\sigma_{\mathrm{PT}}(S)$. This influence measure serves as the primary performance metric for both empirical evaluation and the influence maximization decision problem analyzed in subsequent sections.

\paragraph{Interpretation of the reinforcement parameter.}
The amplification parameter $\alpha$ governs the strength of social reinforcement in the PT model. When $\alpha = 0$, the influence adjustment phase is inactive, and the PT model reduces exactly to the classical Linear Threshold model. Small values of $\alpha$ correspond to diffusion dynamics that remain close to LT, while larger values induce stronger reinforcement effects, allowing nodes that adopt under high pressure to exert substantially greater downstream influence.

Importantly, $\alpha$ modulates behavioral feedback rather than network structure. As such, it can be interpreted as a domain-dependent reinforcement coefficient reflecting how strongly individuals internalize social validation when forming convictions. In practice, $\alpha$ can be treated as a tunable hyperparameter and calibrated using empirical diffusion traces, analogous to the way contagion strength or persuasion intensity is estimated in epidemiological and social influence models.

\paragraph{Algorithmic summary.}
For clarity and reproducibility, Algorithm~\ref{alg:pressure-threshold} summarizes the complete Pressure Threshold diffusion process, including both the activation and influence adjustment phases. The algorithm provides an explicit procedural description of PT diffusion and will serve as the basis for all simulations and experiments reported later in the paper.

\begin{algorithm}[tb]
\caption{Pressure Threshold (PT) Diffusion}
\label{alg:pressure-threshold}
\begin{algorithmic}[1]
\Require Directed graph $G=(V,E)$; influence weights $w_{uv} \in [0,1]$; thresholds $\theta_v \in (0,1]$; seed set $S \subseteq V$; amplification parameter $\alpha \ge 0$
\State $t \gets 0$; $A_t \gets S$ \Comment{initial active set}
\Repeat
    \State $N_t \gets \emptyset$ \Comment{nodes newly activating at time $t$}
    \ForAll{$v \in V \setminus A_t$} \Comment{activation phase}
        \State $P_t(v) \gets \{\, u \in A_t : (u,v)\in E \,\}$
        \State $I_v \gets \sum_{u \in P_t(v)} w_{uv}$
        \If{$I_v \ge \theta_v$}
            \State $N_t \gets N_t \cup \{v\}$
        \EndIf
    \EndFor
    \If{$N_t = \emptyset$}
        \State \textbf{break} \Comment{no new activations; diffusion halts}
    \EndIf
    \ForAll{$v \in N_t$} \Comment{influence adjustment phase (one-shot)}
        \ForAll{$(v,s) \in E$ \textbf{with} $s \notin A_t$}
            \State $w_{vs} \gets \min \bigl(1,\; w_{vs} + \alpha \cdot I_v \bigr)$
        \EndFor
    \EndFor
    \State $A_{t+1} \gets A_t \cup N_t$; $t \gets t+1$
\Until{false}
\State \Return $A_t$
\end{algorithmic}
\end{algorithm}

Together, the definitions above fully specify the Pressure Threshold diffusion model. PT retains the cumulative activation semantics and interpretability of the Linear Threshold framework while introducing a controlled form of social reinforcement through a one-shot amplification mechanism. By tying outgoing influence to the pressure experienced at adoption, the model captures conviction-driven amplification effects observed in real social systems without introducing recursive feedback or destabilizing dynamics.

With the diffusion process now fully defined, we turn to its implications for decision-making on networks. In particular, we study how social reinforcement alters the influence maximization problem, a canonical task in marketing, information campaigns, and public interventions. The next section formulates influence maximization under the Pressure Threshold model and analyzes its computational and structural properties.

\vspace{.1in} 
\section{Influence Maximization under the Pressure Threshold Model: Implications for Viral Marketing and Network Interventions}
\label{sec:decision}

We now study the implications of the Pressure Threshold (PT) model for decision-making on networks, focusing on the influence maximization (IM) problem. In many marketing and information campaign settings, firms must identify a limited set of individuals to target in order to maximize downstream adoption and word-of-mouth spread. Under classical diffusion models such as Linear Threshold, this problem is well-understood and admits strong structural properties that enable efficient optimization. The PT model, however, fundamentally alters this problem by introducing reinforcement-driven changes in influence strength, making the effectiveness of a seed set depend not only on who adopts, but also on how adoption occurs.

Under classical diffusion models such as the Linear Threshold model, influence maximization is well understood as a combinatorial optimization problem with favorable structural properties. In particular, the expected influence function is known to be monotone and submodular, enabling greedy algorithms with strong approximation guarantees. These guarantees underpin much of the practical appeal of influence maximization methods.

The presence of social reinforcement fundamentally alters this decision landscape. Because influence strength in the PT model depends on the social pressure experienced at activation, the marginal value of selecting a particular node as a seed can depend on complex interactions involving activation order, neighborhood composition, and reinforcement dynamics. As a result, intervention strategies that are optimal under static diffusion assumptions need not perform well when reinforcement effects are present.

We begin by formally stating the influence maximization problem induced by the Pressure Threshold diffusion process. Let $G=(V,E)$ be a directed graph, and let $\sigma_{\mathrm{PT}}(S)$ denote the expected influence of a seed set $S \subseteq V$ under the PT model, as defined in Section~\ref{sec:model}. Given a budget constraint $k \in \mathbb{N}$, the influence maximization problem under PT asks for a seed set that maximizes expected spread:
\[
\max_{S \subseteq V} \ \sigma_{\mathrm{PT}}(S)
\quad \text{subject to} \quad |S| \le k.
\]

This formulation mirrors the classical influence maximization problem studied under the Linear Threshold and Independent Cascade models, differing only in the underlying diffusion dynamics that determine the influence function. However, because PT endogenizes influence strength through reinforcement, the resulting objective function depends on both the set of activated nodes and the pathways through which activation occurs. This distinction fundamentally alters the structural properties that enable efficient optimization under classical models.

Before turning to algorithmic approaches, it is important to understand the inherent computational complexity of influence maximization under the Pressure Threshold model. Establishing such limits clarifies which performance guarantees are achievable in principle and which must instead rely on heuristic or data-driven methods.

In classical diffusion settings, influence maximization is known to be computationally intractable. Under both the Linear Threshold and Independent Cascade models, the problem is NP-hard, even to approximate beyond a constant factor \citep{kempe2003maximizing}. These results motivate the widespread use of greedy heuristics supported by approximation guarantees under favorable structural conditions.

At first glance, the additional structure introduced by reinforcement might appear to simplify the decision problem. Because PT ties influence strength to observed social pressure, one might hope that this feedback introduces regularities that reduce combinatorial complexity or enable more efficient optimization. Whether this intuition holds is not obvious a priori and requires careful analysis.

We next show that this is not the case. Despite its additional behavioral structure, influence maximization under the PT model remains computationally hard. In particular, reinforcement preserves the fundamental intractability of the classical problem, even in regimes where PT reduces to familiar diffusion dynamics. At the same time, reinforcement disrupts key structural properties such as submodularity that underpin classical approximation guarantees, creating new challenges for algorithm design and analysis.

\subsection{Computational Complexity}

We begin by establishing the computational hardness of influence maximization under the Pressure Threshold model.

\begin{theorem}\label{thm:pt-np-hard}
Influence maximization under the Pressure Threshold (PT) diffusion model is NP-hard.
\end{theorem}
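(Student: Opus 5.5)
The plan is to reduce from influence maximization under the Linear Threshold model, which \citet{kempe2003maximizing} showed to be NP-hard. The simplest route uses the observation from Section~\ref{sec:model} that when $\alpha = 0$ the influence adjustment phase leaves every weight unchanged ($w'_{vs} = \min(1, w_{vs}) = w_{vs}$, since $w_{vs}\in[0,1]$). In that case the PT process in Algorithm~\ref{alg:pressure-threshold} coincides step by step with LT for every realization of the thresholds, so $\sigma_{\mathrm{PT}}(S) = \sigma_{\mathrm{LT}}(S)$ for all $S$.

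Given an LT instance $(G, w, k)$, we therefore output the PT instance $(G, w, k, \alpha = 0)$. This map is computable in polynomial time, and it preserves optimal seed sets and optimal values. Hence any polynomial-time algorithm for PT influence maximization would solve LT influence maximization.

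For a fully rigorous statement that does not lean on the LT result as a black box, I would also reproduce the underlying reduction. Kempe et al.\ reduce from Vertex Cover, and the cleanest variant reduces directly from Set Cover. Given a universe $U$ and sets $T_1,\dots,T_m$, build a bipartite directed graph with an edge from set-node $T_i$ to element-node $x$ whenever $x \in T_i$. Choose weights so that any single active in-neighbor forces activation: set $w_{T_i x} = 1$ on a designated edge and $0$ on the others, or equivalently use the standard deterministic-activation gadget. Then a seed set of $k$ set-nodes activates $k + |U|$ nodes in expectation if and only if there is a set cover of size $k$.

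The point needing care is that the reduction must survive amplification if one wants hardness for every fixed $\alpha > 0$, not just $\alpha = 0$. In the bipartite construction the element-nodes have no outgoing edges, so no amplification ever propagates and the argument goes through verbatim for all $\alpha$. I would verify this explicitly.

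The main obstacle is the interpretation of the statement: whether $\alpha$ is part of the input, or hardness must hold for each fixed $\alpha$. The $\alpha=0$ embedding handles the former immediately. The depth-one bipartite Set Cover gadget handles the latter, since it has no paths of length two along which pressure could matter. The remaining detail to check is that seeding element-nodes is never better than seeding set-nodes. This follows by the usual exchange argument: replace a seeded element by any set containing it, which does not decrease the spread.
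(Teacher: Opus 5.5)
Your core argument is the paper's proof. You map an LT instance to the PT instance with $\alpha=0$, where the adjustment phase is the identity and the two processes coincide for every threshold realization. That alone establishes the theorem as stated, since the model admits $\alpha\ge 0$.

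The supplementary Set Cover construction, however, fails, so your claim of hardness for each fixed $\alpha>0$ is not established. First, the normalization $\sum_{u\in N(x)}w_{ux}\le 1$ makes ``any single active in-neighbor forces activation'' impossible once an element lies in two sets. That property belongs to the IC reduction with edge probabilities $1$, which does not transfer to LT. With one designated weight-$1$ edge and zeros elsewhere, $x$ activates iff its designated set is seeded. Then $\sigma(A)=k+|U|$ iff $A$ contains every designated set, which has nothing to do with set cover.

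More fundamentally, no depth-one gadget can work. With uniform thresholds, an unseeded element activates with probability exactly $\sum_{T\in A\cap N(x)}w_{Tx}$. So for set-node seeds $\sigma(A)=|A|+\sum_{T\in A}\sum_{x\in T}w_{Tx}$ is modular and is maximized by sorting. PT also coincides with LT on this graph for every $\alpha$, because seeds never receive pressure and elements have no out-edges. The absence of length-two paths, which you cite as the reason amplification is harmless, is exactly what makes these instances easy.

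If you want the fixed-$\alpha$ strengthening, use the bidirected Vertex Cover reduction of \citet{kempe2003maximizing}, with $w_{uv}=1/\deg(v)$ and isolated vertices removed. A vertex cover $S$ gives every unseeded node pressure $1$ in the first round, so $\sigma_{\mathrm{PT}}(S)=n$ for any $\alpha$.

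Conversely, suppose $S$ is not a cover, and consider the positive-probability event that every unseeded $v$ has $\theta_v>1-1/\deg(v)$. On that event:
\begin{enumerate}
\item The first round activates exactly the unseeded nodes whose entire neighborhood is seeded.
\item These nodes have no inactive out-neighbors, so nothing is amplified.
\item They also have no unseeded neighbors, so no remaining node gains a new active in-neighbor and the process halts.
\item Both endpoints of an uncovered edge stay inactive, giving $\sigma_{\mathrm{PT}}(S)<n$.
\end{enumerate}
This argument does not depend on $\alpha$. It therefore strengthens the paper's result, which relies on $\alpha=0$ being an admissible input.
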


\proof{Proof.}
The influence maximization problem under the Linear Threshold (LT) model is known to be NP-hard \citep{kempe2003maximizing}. We reduce this problem to influence maximization under the PT model.

Consider an arbitrary LT instance defined on a graph $G=(V,E)$ with seed budget $k$. Construct the corresponding PT instance on the same graph with the same budget and with amplification parameter $\alpha=0$. When $\alpha=0$, the influence adjustment phase in the PT diffusion process is inactive, and the PT model reduces exactly to the LT model. In particular, for every seed set $S \subseteq V$,
\[
\sigma_{\mathrm{PT}}(S;\alpha=0) = \sigma_{\mathrm{LT}}(S).
\]

Thus, an optimal solution to the PT influence maximization problem with $\alpha=0$ is also an optimal solution to the original LT influence maximization problem. Because this transformation can be performed in polynomial time, a polynomial-time algorithm for PT influence maximization would imply one for LT influence maximization. Therefore, influence maximization under the PT model is NP-hard.
\endproof

The NP-hardness result above establishes that social reinforcement does not alleviate the fundamental computational difficulty of influence maximization. Even though the PT model introduces additional behavioral structure by linking influence strength to social pressure, this structure does not eliminate the combinatorial difficulty inherent in selecting an optimal seed set. As a result, any algorithmic approach to influence maximization under PT must contend with the same worst-case intractability that arises under classical diffusion models.

At the same time, NP-hardness alone does not fully characterize the practical tractability of the problem or the effectiveness of solution methods. In classical settings, influence maximization remains NP-hard despite the availability of efficient greedy algorithms with strong approximation guarantees, driven by favorable structural properties of the influence function.

These guarantees depend critically on properties such as monotonicity and submodularity. Whether such properties continue to hold under the Pressure Threshold model is therefore a central question. Because reinforcement introduces context dependence into diffusion dynamics, it may alter the behavior of the influence function in ways that are incompatible with classical approximation theory.

We next examine how social reinforcement affects the monotonicity and submodularity of the PT influence function, and discuss the implications for optimization in marketing and network intervention settings.

\subsection{Monotonicity Analysis}

We begin by examining the monotonicity of the influence function under the PT model. Monotonicity is a central property in classical influence maximization, as it ensures that adding additional seed nodes cannot reduce expected influence and supports stable targeting strategies in marketing and intervention contexts.

\begin{definition}[Monotonicity]
A set function $f : 2^V \to \mathbb{R}$ is \emph{monotone} if for all seed sets $S \subseteq T \subseteq V$, it holds that
\[
f(S) \le f(T).
\]
\end{definition}

Under the Linear Threshold and Independent Cascade models, the expected influence function is monotone. This property ensures that selecting additional seeds is never harmful and, together with submodularity, enables greedy algorithms with strong approximation guarantees. In the absence of monotonicity, however, standard greedy reasoning no longer applies: adding a seed may reduce overall influence, complicating both analysis and intuition.

We now show that the influence function under the Pressure Threshold model does not satisfy monotonicity in general. Unlike classical diffusion models, monotonicity need not hold under PT due to the endogenous role of reinforcement at activation, which creates a subtle trade-off between early seeding and influence amplification.

\begin{theorem}\label{thm:pt-not-monotone}
There exists a directed graph $G$, a seed budget $k$, and an amplification parameter $\alpha > 0$ such that the PT influence function $\sigma_{\mathrm{PT}}(\cdot)$ is not monotone.
\end{theorem}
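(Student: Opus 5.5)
We prove the statement by exhibiting a small explicit instance. The mechanism we exploit is the following. A seed node is active at $t=0$ and never passes through the influence adjustment phase, so it never receives the amplification $\alpha I_v$. As a result, seeding a node that would otherwise have been activated \emph{under pressure} can weaken its outgoing influence. We compare $S=\{a\}$ with $T=\{a,b\}\supseteq S$ and show $\sigma_{\mathrm{PT}}(T)<\sigma_{\mathrm{PT}}(S)$.

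\textbf{Construction.} Let $V=\{a,b,c,x\}$ and $E=\{(a,b),(b,c),(x,c)\}$, with weighted-cascade weights. Then $w_{ab}=1$, since $b$ has in-degree one, and $w_{bc}=w_{xc}=1/2$. Take $\alpha=1/2$, or any $\alpha\ge 1/2$, and budget $k=2$. Node $x$ has no in-neighbors, so it is never activated unless seeded. Its only role is to keep $w_{bc}$ at $1/2$ while respecting the normalization $\sum_{u\in N(c)} w_{uc}\le 1$.

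\textbf{Step 1: $S=\{a\}$.} At the first step, $b$ sees incoming pressure $I_b=w_{ab}=1\ge\theta_b$ for every $\theta_b\in(0,1]$, so $b$ activates with certainty. Node $c$ does not activate in this round: $b$ was not yet in $A_0$, so $c$'s incoming influence is $0$ and $0<\theta_c$. In the adjustment phase, $c$ is still inactive, so the update gives $w'_{bc}=\min(1,\tfrac12+\alpha\cdot 1)=1$. In the next round $c$ therefore sees influence $1\ge\theta_c$ and activates surely. Node $x$ never activates. Hence the active set is $\{a,b,c\}$ for every threshold realization, and $\sigma_{\mathrm{PT}}(\{a\})=3$.

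\textbf{Step 2: $T=\{a,b\}$.} Now $b$ is a seed, so its weight to $c$ stays $w_{bc}=1/2$. Node $c$ activates if and only if $\theta_c\le 1/2$, which occurs with probability $1/2$. Nothing else can activate. Hence $\sigma_{\mathrm{PT}}(\{a,b\})=2+\tfrac12=\tfrac52<3$.

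\textbf{Conclusion and main obstacle.} Since $S\subseteq T$ but $\sigma_{\mathrm{PT}}(S)>\sigma_{\mathrm{PT}}(T)$, the influence function is not monotone, which proves Theorem~\ref{thm:pt-not-monotone}. The only delicate point is the timing in Algorithm~\ref{alg:pressure-threshold}. First, the activation check in a round uses $A_t$ \emph{before} the newly activated nodes are added. Second, the amplification is applied to edges whose targets are inactive at that moment. Both must be verified for the edge $(b,c)$, since they guarantee that $c$ is evaluated against the amplified weight $w'_{bc}=1$ rather than the original $1/2$. Both follow directly from the algorithm. If one wished to avoid the degenerate node $x$, the same idea works with $c$ having an extra in-neighbor of weight $1/2$ that is never active.
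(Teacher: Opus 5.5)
Your proof is correct and uses the same mechanism as the paper: seeding a node that would otherwise activate under pressure denies it the amplification $\alpha I_v$, which shrinks the downstream cascade, exactly as in the paper's $v$/$u$ illustration. Your version is more complete than the paper's, since you give an explicit weighted-cascade instance, check the round timing of Algorithm~\ref{alg:pressure-threshold}, and compute $\sigma_{\mathrm{PT}}(\{a\})=3>\tfrac52=\sigma_{\mathrm{PT}}(\{a,b\})$ in expectation over the thresholds, whereas the paper only sketches the example qualitatively.
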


\proof{Proof.}
The proof proceeds by construction. We exhibit an instance in which adding a node to the seed set suppresses downstream reinforcement that would otherwise arise through propagation.

Under the PT model, only nodes that activate through diffusion (as opposed to being seeded) receive influence amplification. Thus, seeded nodes do not accumulate incoming pressure and cannot benefit from reinforcement, whereas nodes activated endogenously may do so. As a result, seeding a node directly can prevent that node from activating via social pressure, thereby eliminating the reinforcement effect it would have generated had it been activated endogenously. In such cases, adding a node to the seed set can reduce the total amplified influence in later rounds, leading to a strictly smaller expected cascade.

An explicit example demonstrating this effect is presented in the following subsection.
\endproof

\paragraph{Illustrative example.}
We illustrate the non-monotonicity result in Theorem~\ref{thm:pt-not-monotone} with a simple example that highlights the interaction between seeding and reinforcement.

Consider a directed graph $G=(V,E)$ with nodes arranged so that node $v$ can either be activated directly as a seed or activated indirectly via social pressure from its neighbors. Let the amplification parameter satisfy $\alpha > 0$. Suppose that when $v$ is activated through diffusion, it receives sufficient incoming pressure to substantially amplify its outgoing influence, enabling it to activate a downstream node $u$. In contrast, when $v$ is included in the seed set, it activates immediately but does not receive any incoming pressure and therefore exerts only its baseline influence.

Let $S$ be a seed set that does not include $v$. Under PT diffusion from $S$, node $v$ activates endogenously after accumulating influence from its neighbors, receives reinforcement upon activation, and subsequently activates node $u$. Now consider the augmented seed set $T = S \cup \{v\}$. In this case, node $v$ is activated exogenously at the start of the diffusion process, receives no amplification, and fails to activate $u$. As a result, the final cascade originating from $T$ is strictly smaller than that originating from $S$.

This example demonstrates that under PT diffusion, adding a node to the seed set can suppress downstream reinforcement that would otherwise arise through social pressure. Consequently, the expected influence under PT need not increase monotonically with the size of the seed set.

\subsection{Submodularity Analysis}

We next examine submodularity of the influence function. Submodularity is a diminishing-returns property that plays a central role in the algorithmic theory of influence maximization. Intuitively, a submodular influence function guarantees that the marginal benefit of adding a node as a seed decreases as the seed set grows.

\begin{definition}[Submodularity]
A set function $f : 2^V \to \mathbb{R}$ is \emph{submodular} if for all seed sets $S \subseteq T \subseteq V$ and all nodes $v \in V \setminus T$, it holds that
\[
f(S \cup \{v\}) - f(S) \;\ge\; f(T \cup \{v\}) - f(T).
\]
\end{definition}

Under classical diffusion models such as Linear Threshold and Independent Cascade, the expected influence function is both monotone and submodular. These properties together enable the celebrated $(1 - 1/e)$ approximation guarantee for greedy algorithms \citep{nemhauser1978analysis,kempe2003maximizing}. Submodularity is particularly crucial, as it ensures that greedy selection remains near-optimal despite the combinatorial nature of the problem.

In particular, reinforcement induces strategic complementarities among seed nodes, as the presence of one seed can increase the marginal value of another through amplified propagation. The reinforcement mechanism introduced by the PT model disrupts this diminishing-returns structure. Because activation under PT can amplify future influence, the marginal contribution of a node can increase when it is added later rather than earlier. This context dependence creates complementarities among seeds that are incompatible with classical submodularity.

We formalize this observation below.

\begin{theorem}\label{thm:pt-not-submodular}
There exists a directed graph $G$, a seed budget $k$, and an amplification parameter $\alpha > 0$ such that the PT influence function $\sigma_{\mathrm{PT}}(\cdot)$ is not submodular.
\end{theorem}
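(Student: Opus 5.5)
The plan is to prove Theorem~\ref{thm:pt-not-submodular} by an explicit small counterexample, in the same constructive style as Theorem~\ref{thm:pt-not-monotone}. The mechanism to exploit is the complementarity described above. Two seeds jointly pushing a common out-neighbor give it a larger incoming pressure $I_v$, and hence a larger amplified outgoing weight. So the second seed's marginal value can exceed the first's.

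\textbf{Construction.} Take $V=\{a,b,c,d,e\}$ with edges $(a,c),(b,c),(c,d),(e,d)$ and weighted-cascade weights, so $w_{ac}=w_{bc}=1/2$ and $w_{cd}=w_{ed}=1/2$. Node $e$ has no in-edges and only serves to keep $w_{cd}$ below $1$. Fix budget $k=2$ and $\alpha=1/2$. Compare $S=\emptyset\subseteq T=\{a\}$ with added node $b\notin T$. Thresholds are continuous, so ties have probability zero and can be ignored.

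\textbf{Computing the four values of $\sigma_{\mathrm{PT}}$.}
\begin{itemize}
\item $\sigma_{\mathrm{PT}}(\emptyset)=0$.
\item $\sigma_{\mathrm{PT}}(\{b\})$: at round $1$, node $c$ activates iff $\theta_c\le 1/2$, with $I_c=1/2$. Its edge to the still-inactive $d$ is updated to $\min(1,1/2+\alpha/2)=3/4$. So $d$ activates at round $2$ with probability $3/4$, giving
\[
\sigma_{\mathrm{PT}}(\{b\})=1+\tfrac12+\tfrac12\cdot\tfrac34 .
\]
By symmetry, $\sigma_{\mathrm{PT}}(\{a\})$ takes the same value.
\item $\sigma_{\mathrm{PT}}(\{a,b\})$: node $c$ activates surely with $I_c=1$. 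Then $w'_{cd}=\min(1,1/2+\alpha)=1$, so $d$ activates surely and
\[
\sigma_{\mathrm{PT}}(\{a,b\})=4 .
\]
\end{itemize}

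\textbf{Comparing marginal gains.} The first gain is $\sigma_{\mathrm{PT}}(\{b\})-\sigma_{\mathrm{PT}}(\emptyset)=15/8$. The second is $\sigma_{\mathrm{PT}}(\{a,b\})-\sigma_{\mathrm{PT}}(\{a\})=4-15/8=17/8$. Since $17/8>15/8$, the submodularity inequality fails.

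As a sanity check, the same computation at $\alpha=0$ gives equal marginals ($7/4$ each), consistent with LT. This confirms that the violation comes purely from the reinforcement term. In general the violation holds whenever $\min(1,w_{cd}+\alpha)>\min(1,w_{cd}+\alpha/2)$.

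\textbf{Main obstacle.} The delicate step is checking the timing semantics of Algorithm~\ref{alg:pressure-threshold}. The two points to confirm are that $I_c$ is computed from the pre-update weights at the round $c$ activates, and that $(c,d)$ is indeed updated because $d$ is still inactive then. A second point is choosing $w_{cd}$ and $\alpha$ so the $\min(1,\cdot)$ cap does not saturate in both cases; the auxiliary node $e$ and $\alpha=1/2$ are chosen for exactly that.
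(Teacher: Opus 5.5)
Your counterexample is correct, and it follows a genuinely different and more complete route than the paper's. Under the timing of Algorithm~\ref{alg:pressure-threshold}, node $c$ activates with $I_c=1/2$ (probability $1/2$) from one seed, or with $I_c=1$ (surely) from both. The edge $(c,d)$ is updated because $d\notin A_0$, and your values $15/8$ and $17/8$ check out. In fact the second marginal exceeds the first by exactly $\min(1,\tfrac12+\alpha)-\min(1,\tfrac12+\tfrac{\alpha}{2})$, which is positive for every $0<\alpha<1$ and vanishes at $\alpha=0$, as your sanity check indicates.

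The paper offers only a verbal sketch. In it, the element being added is the pressured node $v$ itself, and the paper asserts that adding $v$ to a larger $T$ makes $v$ activate under high pressure. That cannot happen under PT. A seeded node is active at $t=0$ with no incoming pressure, and, as the paper's own non-monotonicity argument stresses, it never receives amplification. Seeding $v$ can therefore only forgo reinforcement, not trigger it, so the sketched inequality is never actually exhibited.

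Your construction places the complementarity where it really lives. The added element $b$ is a second in-neighbor of the amplified node $c$. The seeds therefore jointly raise both $c$'s activation probability and its amplified weight $w'_{cd}$. The product of these two increasing quantities makes $d$'s contribution grow faster than additively; with $\alpha=0$ the weight is fixed and the gains coincide.

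In short, the paper's sketch conveys the general intuition of reinforcement-induced seed complementarity, while your version supplies a checkable certificate. The auxiliary node $e$ does necessary work: it keeps the $\min(1,\cdot)$ cap from erasing the effect.
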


\proof{Proof.}
The proof again proceeds by construction. Under PT diffusion, adding a node to a larger seed set can trigger reinforcement effects that do not arise when the same node is added to a smaller seed set. In such cases, the marginal gain from adding a node increases with the size of the seed set, violating the diminishing-returns property required for submodularity.

An explicit construction illustrating this effect is provided in the following subsection.
\endproof

\paragraph{Illustrative example.}
We illustrate the non-submodularity result in Theorem~\ref{thm:pt-not-submodular} with a simple example showing how reinforcement can increase marginal returns as the seed set grows.

Consider a directed graph in which node $v$ can be activated either weakly, with little incoming pressure, or strongly, with substantial incoming pressure supplied by multiple active neighbors. Let $\alpha > 0$. Suppose that when $v$ activates under strong pressure, the resulting amplification of its outgoing influence is sufficient to activate a downstream node $u$, whereas activation under weak pressure fails to do so.

Let $S$ be a small seed set in which only one of $v$’s neighbors is active. Adding $v$ to $S$ causes $v$ to activate immediately but without reinforcement, producing little downstream influence. As a result, the marginal gain $\sigma_{\mathrm{PT}}(S \cup \{v\}) - \sigma_{\mathrm{PT}}(S)$ is small.

Now consider a larger seed set $T \supset S$ that activates multiple neighbors of $v$. In this setting, adding $v$ to $T$ leads to activation under high incoming pressure, triggering reinforcement at activation and substantially amplifying $v$’s outgoing influence. This amplified influence enables $v$ to activate $u$, yielding a larger marginal gain:
\[
\sigma_{\mathrm{PT}}(T \cup \{v\}) - \sigma_{\mathrm{PT}}(T)
\;>\;
\sigma_{\mathrm{PT}}(S \cup \{v\}) - \sigma_{\mathrm{PT}}(S).
\]

Thus, the marginal contribution of node $v$ increases when it is added later rather than earlier. This behavior directly contradicts the diminishing-returns requirement of submodularity and arises from reinforcement-induced complementarities among seed nodes.

The results above establish that influence maximization under the Pressure Threshold model differs fundamentally from its classical counterparts. While reinforcement does not alter the computational hardness of the problem, it eliminates key structural properties that underpin standard approximation theory. In particular, the PT influence function need not be monotone and need not be submodular, as reinforcement introduces context dependence and complementarities that violate diminishing returns.

These findings have important implications for algorithm design. Classical greedy algorithms rely critically on monotonicity and submodularity to guarantee near-optimal performance. When these properties fail, worst-case guarantees no longer apply, and naive application of greedy selection can lead to unpredictable behavior, particularly in marketing and intervention settings where repeated exposure and peer validation shape behavior. At the same time, the structural failures observed here are driven by meaningful behavioral mechanisms rather than pathological constructions, suggesting that purely worst-case analysis may be overly pessimistic.

The loss of classical structure does not preclude effective influence maximization in practice. Instead, it calls for alternative approximation frameworks that explicitly account for reinforcement effects and provide data-dependent performance bounds. In the next section, we develop such an approach by bounding the PT influence function between monotone submodular surrogates, enabling principled greedy heuristics despite the absence of traditional guarantees.

\vspace{.1in} \section{Model Behavior and Approximation Methods}
\label{sec:model_properties}

The preceding section established that influence maximization under the Pressure Threshold (PT) model is computationally hard and lacks the monotonicity and submodularity properties that support classical greedy approximation guarantees. These challenges are particularly relevant in marketing and platform settings, where decision-makers must operate with models that capture reinforcement and peer validation but still admit tractable approximation methods. In this section, we take two complementary steps toward practical decision support under reinforcement-driven diffusion.

First, we provide a compact model demonstration that builds intuition for how activation-triggered amplification alters diffusion dynamics relative to the Linear Threshold (LT) model. This demonstration is not intended to establish general theoretical claims, but rather to illustrate, at a mechanistic level, how reinforcement propagates through a network.

Second, we develop an algorithmic response to the loss of classical structure identified in Section~\ref{sec:decision}. By bounding the PT influence function between monotone submodular surrogates, we derive a sandwich approximation framework that enables principled greedy seed selection despite the absence of traditional worst-case guarantees.

\subsection{Model Demonstration}

To build intuition for the reinforcement dynamics introduced by the Pressure Threshold (PT) model, we present a compact visualization contrasting PT and the classical Linear Threshold (LT) model on a small, controlled instance. The objective is not to claim generality from a toy example, but rather to provide a clear, step-by-step illustration of how additive amplification upon activation alters propagation relative to static-weight diffusion on the same graph.

By holding the network structure, thresholds, and seed set fixed across models, this demonstration isolates the effect of reinforcement and illustrates how local pressure-dependent amplification can accumulate over time. The example serves as a concrete reference point for understanding the structural and algorithmic results developed in later sections. In particular, it highlights how reinforcement transforms diffusion from a purely structural process into one that depends on the history and intensity of activation.

We illustrate the PT dynamics using a small-scale example with amplification parameter $\alpha = 0.25$ on a $9$-node network. Figures~\ref{fig:ptlt-early} and~\ref{fig:ptlt-late} visualize the evolution of PT diffusion alongside LT on the same graph and with an identical seed set. A moderately large value of $\alpha$ is chosen to highlight the qualitative distinction between reinforcement-driven and static diffusion while keeping the example easy to trace visually.

Both models begin with two identical seed nodes at time $t = 0$. Activated nodes are shown in red, newly activated nodes are highlighted with a black outline, and unactivated nodes are shown in blue. In the PT model, edges whose weights are amplified during the influence adjustment phase are highlighted in red. Node labels display the node index (underlined), together with the corresponding activation threshold $\theta$.

In the earliest timesteps, PT and LT behave similarly, since all influence weights are initially identical and no amplification has yet occurred. Differences begin to emerge once nodes activate through propagation in PT and receive reinforcement (Figures~\ref{fig:ptlt-early}). As amplification accumulates across rounds, PT enables additional nodes to cross their activation thresholds, whereas the LT cascade stalls earlier (Figure~\ref{fig:ptlt-late}). In this example, PT ultimately achieves full network coverage, while LT terminates after activating only five nodes.


\begin{figure*}[htbp]
\centering
\begin{subfigure}[t]{0.48\textwidth}
    \centering
    \begin{subfigure}[t]{\linewidth}
        \includegraphics[width=\linewidth]{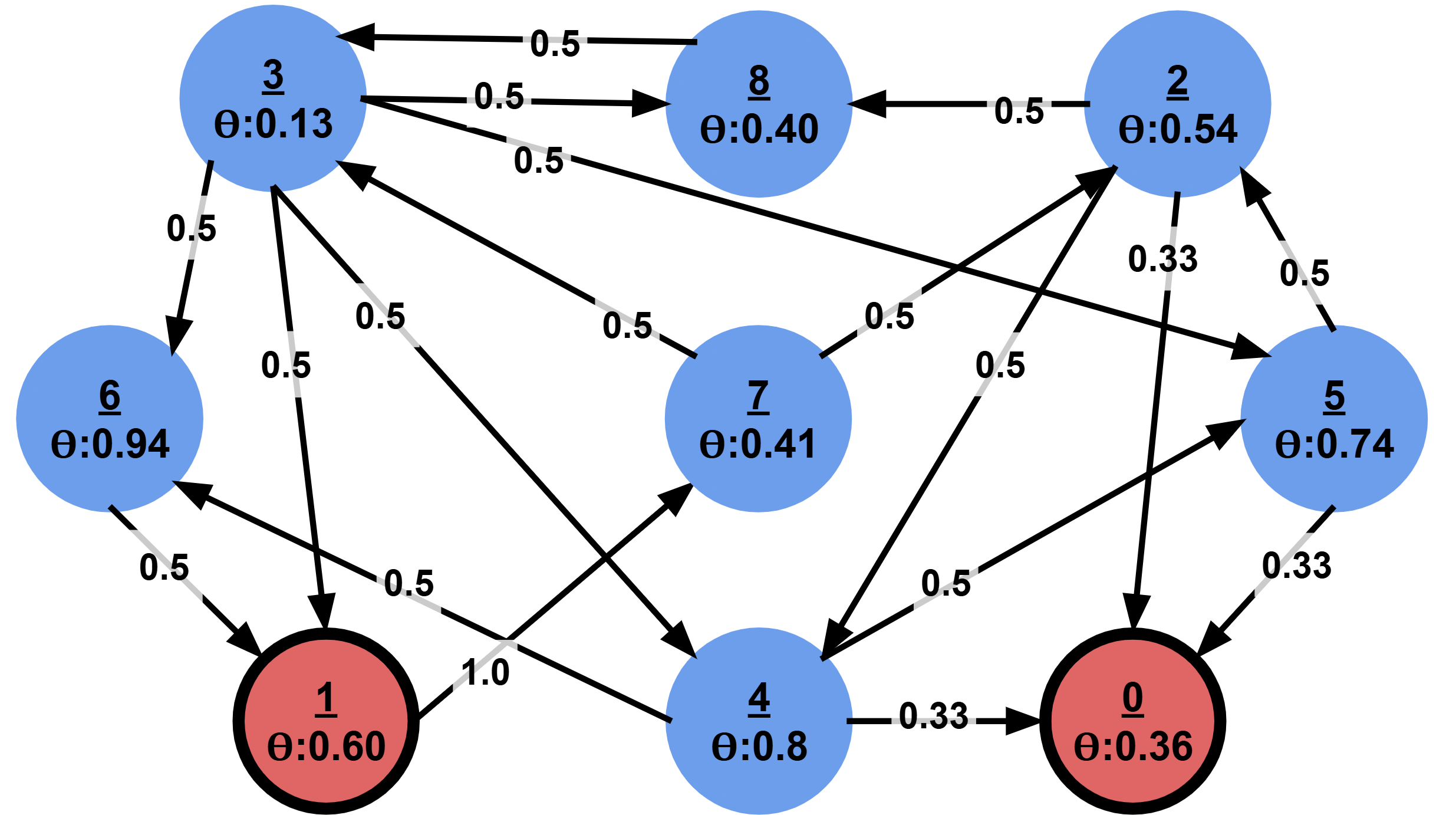}
        \vspace{.5pt}
        \caption*{PT Step 0 (No. of activated nodes = 2)}
        \label{fig:pt-step0}
    \end{subfigure}\vspace{35pt}
    \begin{subfigure}[t]{\linewidth}
        \includegraphics[width=\linewidth]{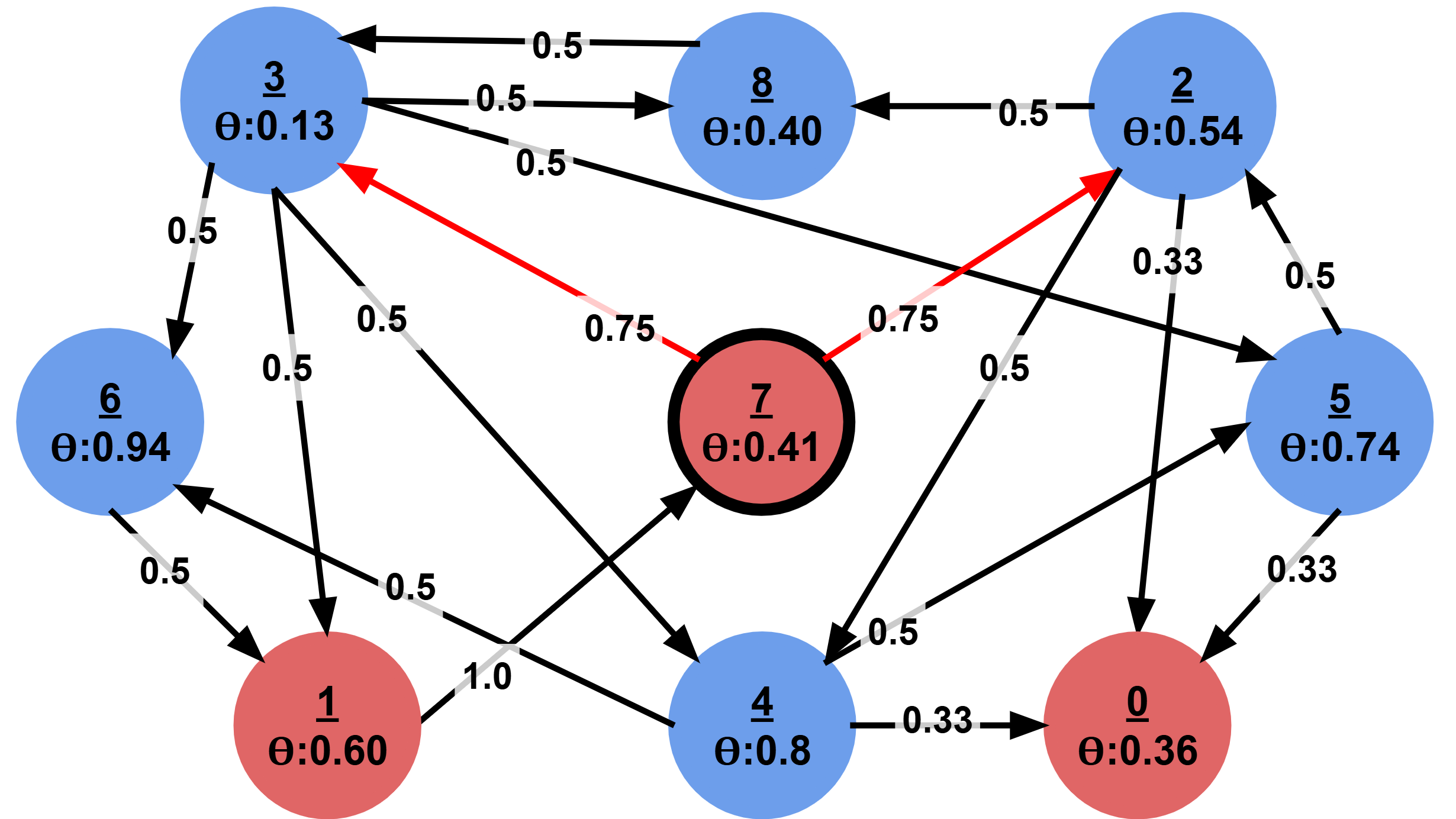}
        \vspace{.5pt}
        \caption*{PT Step 1 (No. of activated nodes = 3)}
        \label{fig:pt-step1}
    \end{subfigure}\vspace{35pt}
    \begin{subfigure}[t]{\linewidth}
        \includegraphics[width=\linewidth]{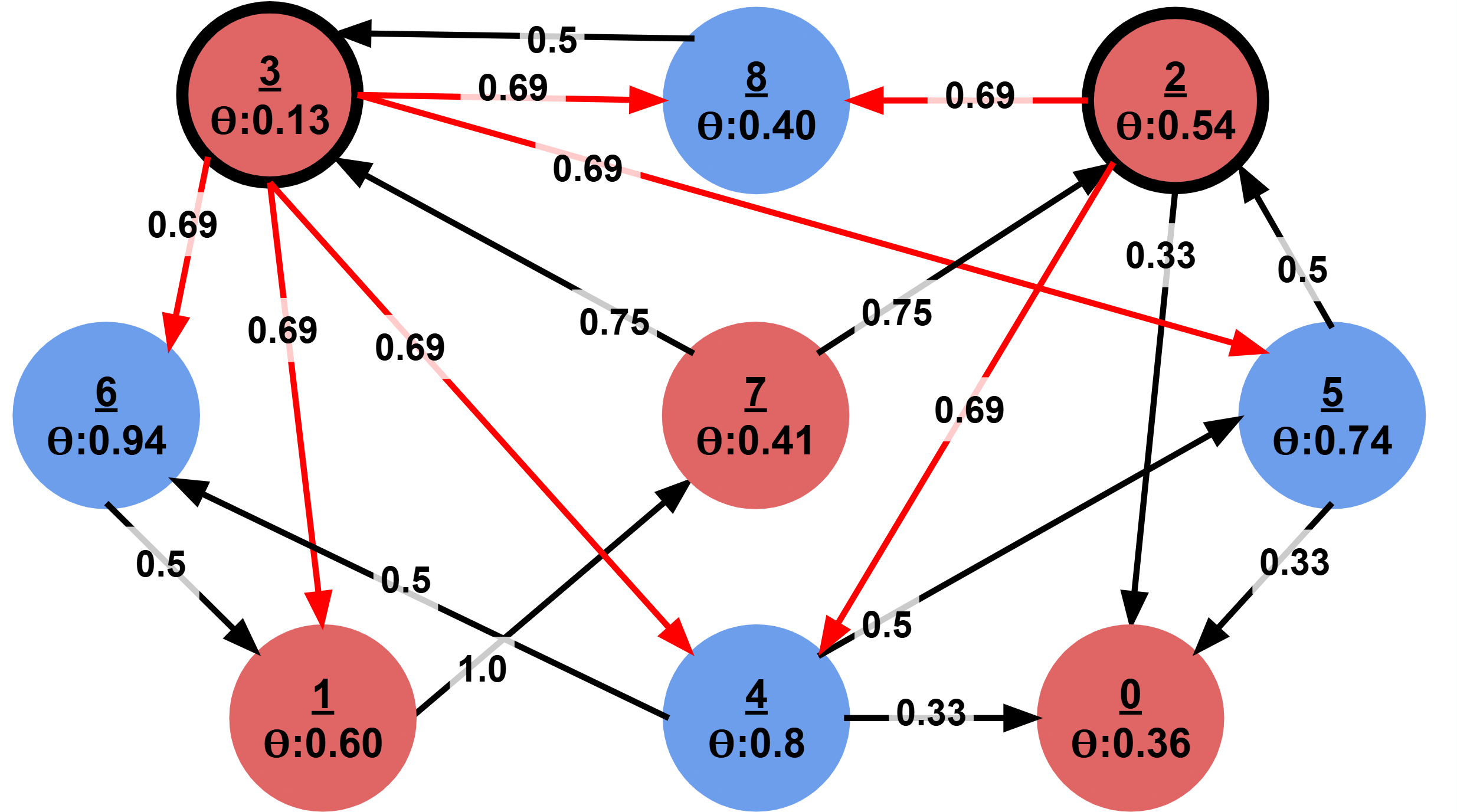}
        \vspace{.5pt}
        \caption*{PT Step 2 (No. of activated nodes = 5)}
        \label{fig:pt-step2}
    \end{subfigure}\vspace{10pt}
\end{subfigure}\hfill
\begin{subfigure}[t]{0.48\textwidth}
    \centering
    \begin{subfigure}[t]{\linewidth}
        \includegraphics[width=\linewidth]{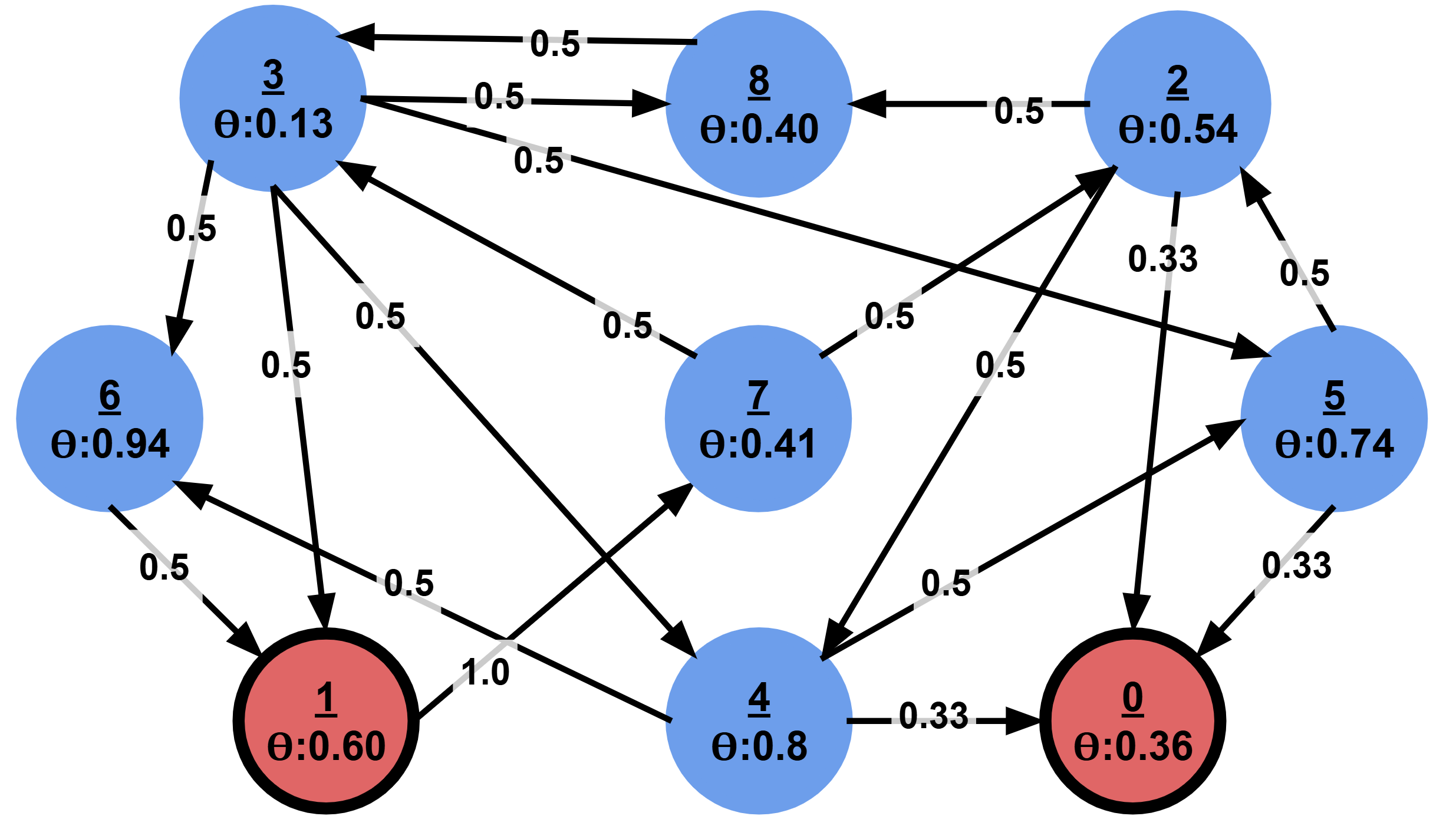}
        \vspace{.5pt}
        \caption*{LT Step 0 (No. of activated nodes = 2)}
        \label{fig:lt-step0}
    \end{subfigure}\vspace{35pt}
    \begin{subfigure}[t]{\linewidth}
        \includegraphics[width=\linewidth]{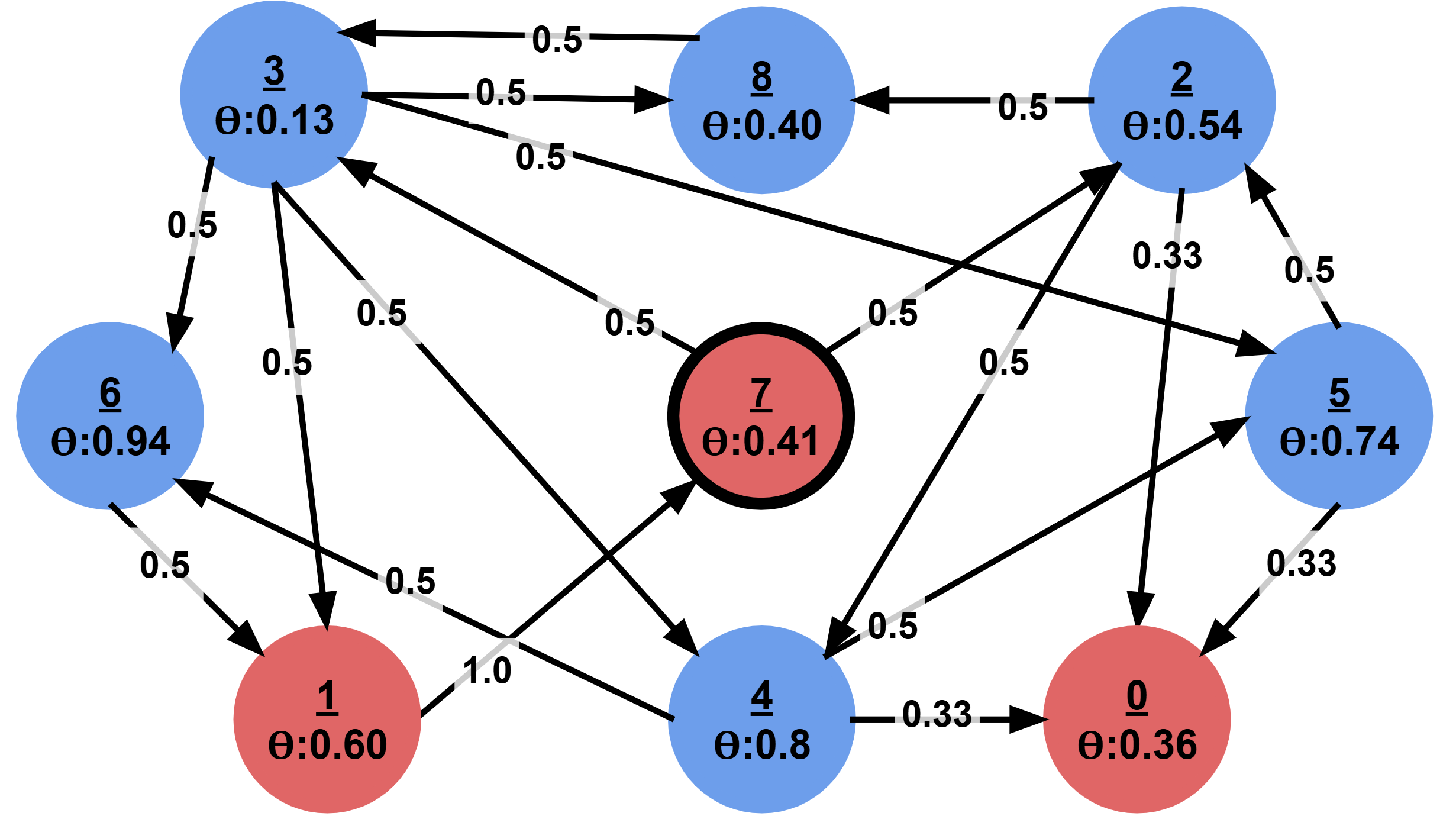}
        \vspace{.5pt}
        \caption*{LT Step 1 (No. of activated nodes = 3)}
        \label{fig:lt-step1}
    \end{subfigure}\vspace{35pt}
    \begin{subfigure}[t]{\linewidth}
        \includegraphics[width=\linewidth]{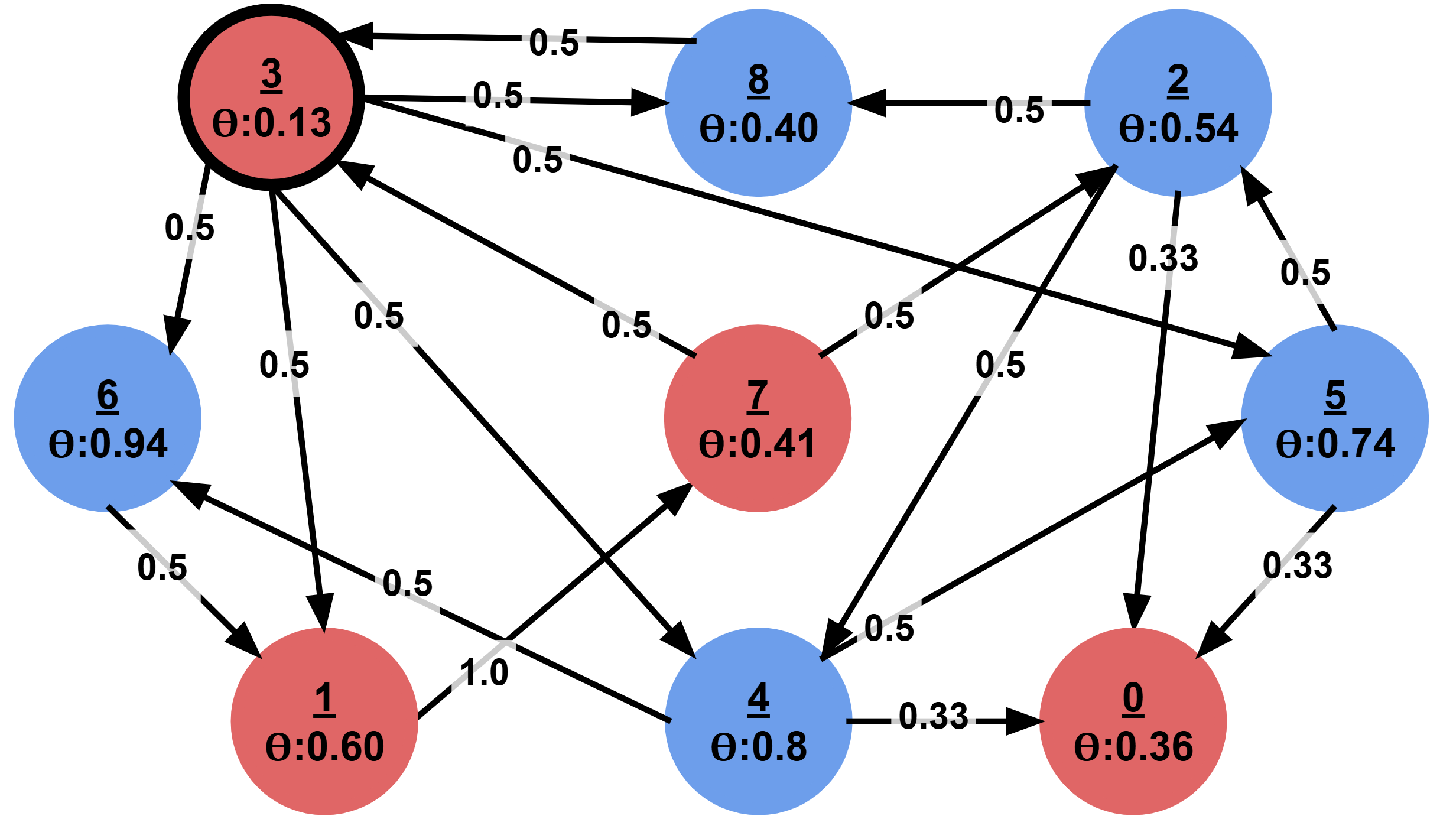}
        \vspace{.5pt}
        \caption*{LT Step 2 (No. of activated nodes = 4)}
        \label{fig:lt-step2}
    \end{subfigure}\vspace{10pt}
\end{subfigure}
\caption{Early diffusion stages under PT and LT on the same graph using an identical seed set. Visualizations show the initial evolution of the diffusion process, during which the reinforcement effects are present but have not yet accumulated strongly. Activated nodes are shown in red, newly activated nodes are highlighted in red with a black outline, and unactivated nodes are shown in blue. Newly amplified edges resulting from PT’s influence adjustment mechanism are highlighted in red. Node labels display the node index (underlined) together with the corresponding activation threshold $\theta$.}
\label{fig:ptlt-early}
\end{figure*}


\begin{figure*}[htbp]
\centering
\begin{subfigure}[t]{0.48\textwidth}
    \centering
    \begin{subfigure}[t]{\linewidth}
        \includegraphics[width=\linewidth]{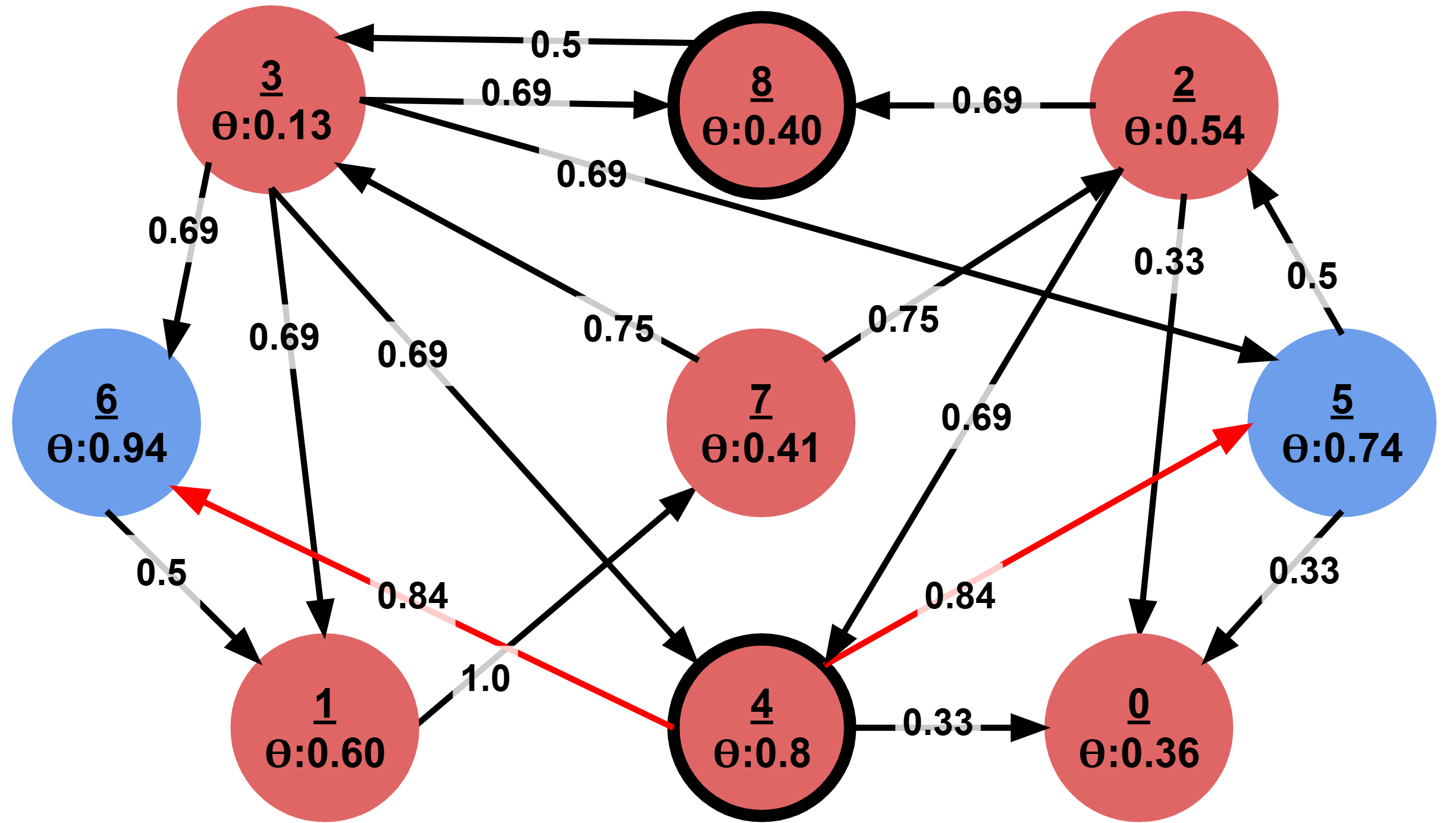}
        \vspace{.5pt}
        \caption*{PT Step 3 (No. of activated nodes = 7)}
        \label{fig:pt-step3}
    \end{subfigure}\vspace{35pt}
    \begin{subfigure}[t]{\linewidth}
        \includegraphics[width=\linewidth]{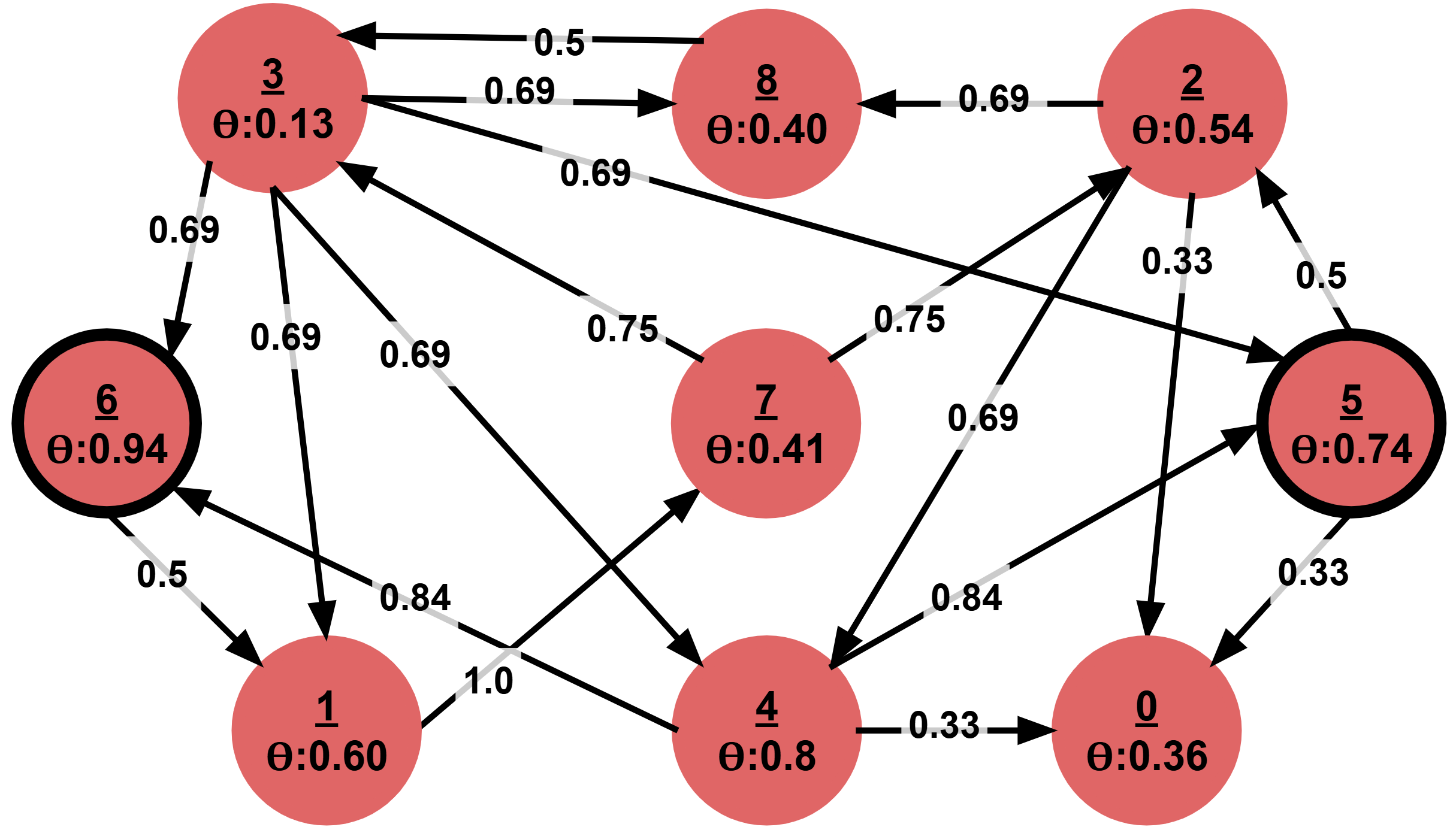}
        \vspace{.5pt}
        \caption*{PT Step 4 (No. of activated nodes = 9)}
        \label{fig:pt-step4}
    \end{subfigure}\vspace{35pt}
    \begin{subfigure}[t]{\linewidth}
        \includegraphics[width=\linewidth]{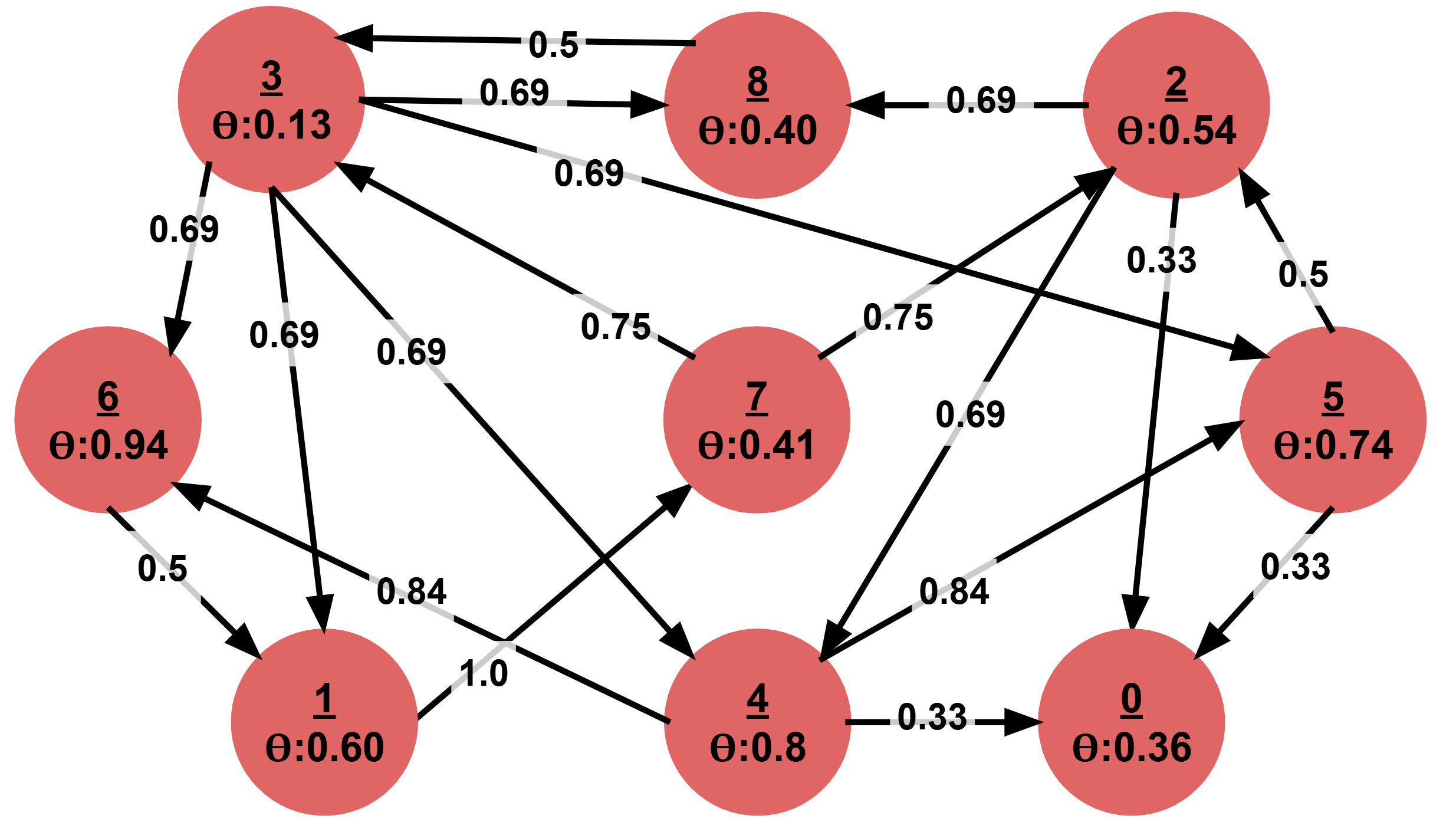}
        \vspace{.5pt}
        \caption*{PT Final Step (No. of activated nodes = 9)}
        \label{fig:pt-step5}
    \end{subfigure}\vspace{10pt}
\end{subfigure}\hfill
\begin{subfigure}[t]{0.48\textwidth}
    \centering
    \begin{subfigure}[t]{\linewidth}
        \includegraphics[width=\linewidth, trim=.05pt .05pt .05pt .05pt, clip]{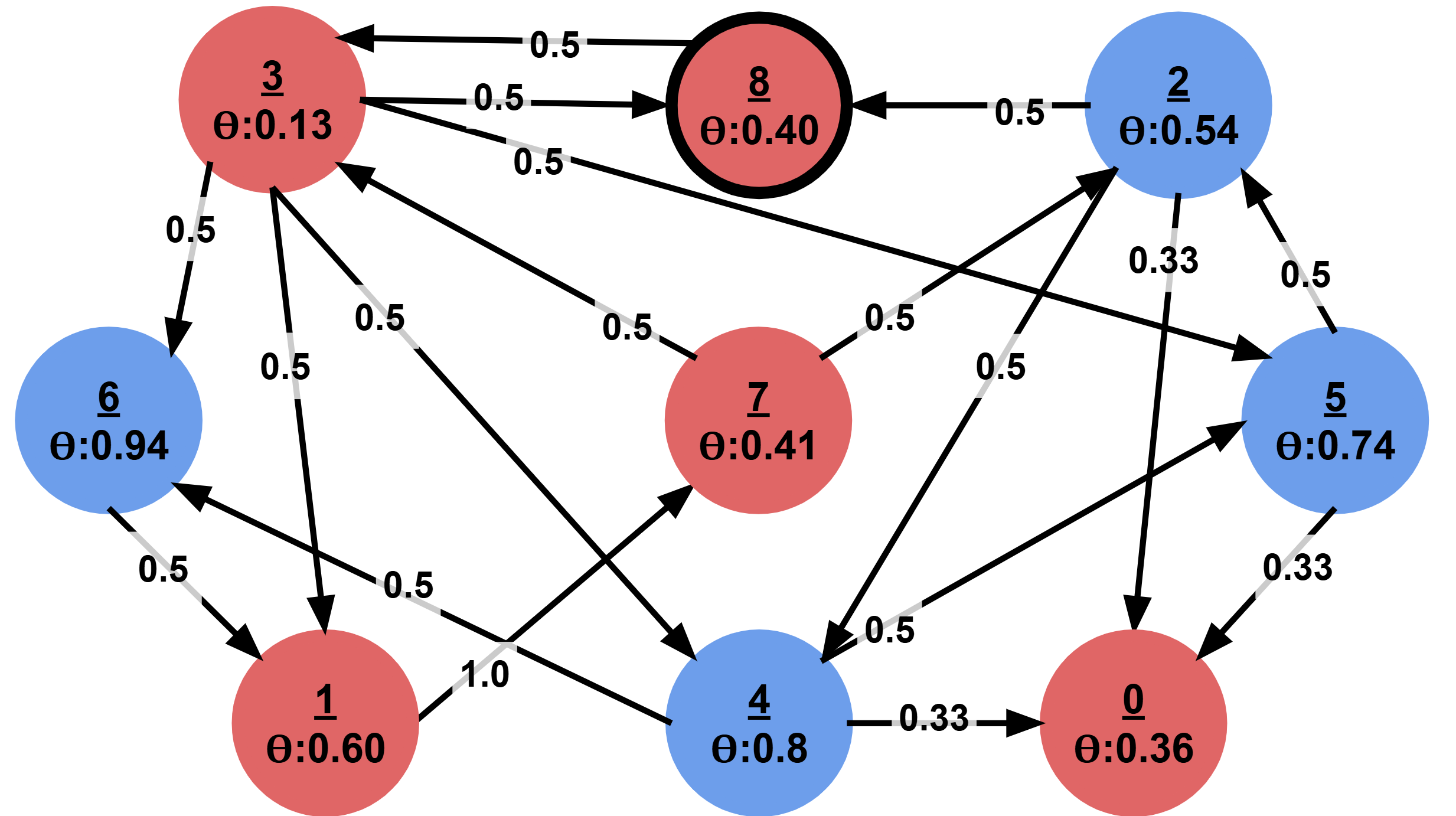}
        \vspace{.5pt}
        \caption*{LT Step 3 (No. of activated nodes = 5)}
        \label{fig:lt-step3}
    \end{subfigure}\vspace{35pt}
    \begin{subfigure}[t]{\linewidth}
        \includegraphics[width=\linewidth, trim=0pt 0pt 0pt 0pt, clip]{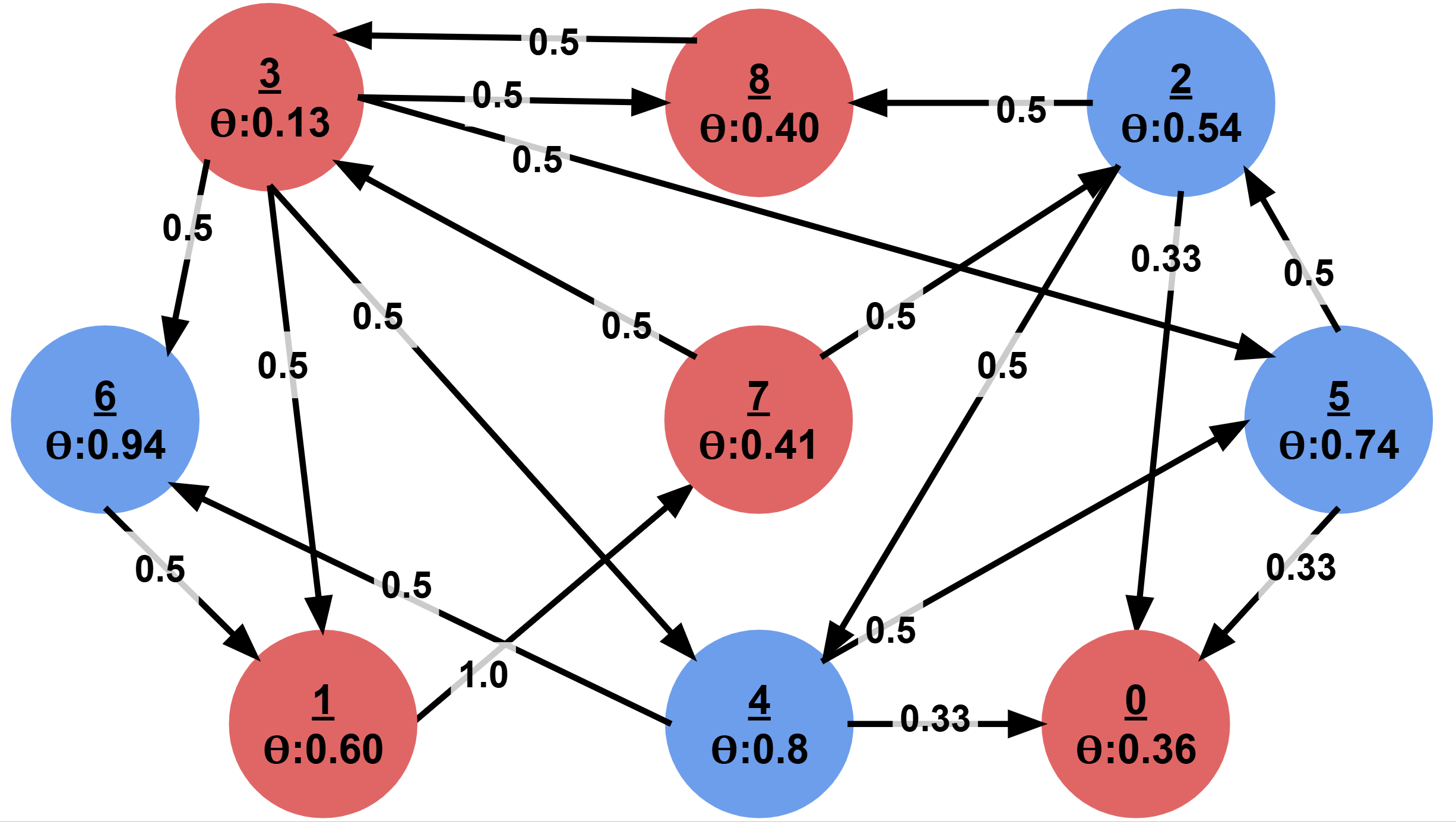}
        \vspace{.5pt}
        \caption*{LT Final Step (No. of activated nodes = 5)}
        \label{fig:lt-final}
    \end{subfigure}\vspace{35pt}
    \begin{subfigure}[t]{\linewidth}
    \end{subfigure}\vspace{10pt}
\end{subfigure}
\caption{Later diffusion stages under PT and LT on the same graph using an identical seed set. As the diffusion evolves, reinforcement effects intensify. PT continues amplifying outgoing influence along activated edges and ultimately attains full network coverage, whereas LT halts after activating fewer nodes. Activated nodes are shown in red, newly activated nodes are highlighted in red with a black outline, and unactivated nodes are shown in blue. Newly amplified edges due to PT’s reinforcement mechanism are highlighted in red. Node labels display the node index (underlined) together with the corresponding activation threshold $\theta$.}
\label{fig:ptlt-late}
\end{figure*}

This simple example illustrates how activation-triggered reinforcement can qualitatively alter diffusion outcomes, even when the network structure, thresholds, and initial seeds are fixed. In particular, it highlights how local amplification effects can accumulate over time, enabling cascades that would otherwise stall under static-weight diffusion. While this example is not intended to characterize typical or worst-case behavior, it provides a concrete reference point for understanding how reinforcement manifests in the PT model. This shift from static to reinforcement-driven diffusion is precisely what underlies the structural breakdown of monotonicity and submodularity established earlier.


\subsection{Consequences of Reinforcement for Optimization}

As established in Section~\ref{sec:decision}, influence maximization under the PT model remains NP-hard and lacks the monotonicity and submodularity properties that support classical greedy guarantees. These structural violations arise from the same reinforcement mechanism illustrated in the preceding model demonstration: activation-triggered amplification introduces context dependence and complementarities that are incompatible with diminishing returns. From a decision perspective, this implies that the value of targeting a given individual depends critically on the broader set of targeted individuals, rather than being separable across nodes.

Rather than abandoning greedy approaches altogether, we develop an approximation framework that explicitly accounts for reinforcement effects. Our strategy is to bound the PT influence function between two tractable, monotone submodular functions. These bounds provide a principled way to evaluate greedy solutions under PT, even when the true objective lacks favorable structure.

To address these challenges, we construct tractable surrogate models that preserve key aspects of PT dynamics while restoring structural properties that enable efficient optimization.

\subsubsection{Static Maximum-Pressure Model Definition}

We begin by defining a static upper-bounding diffusion model, denoted $M^+$. The $M^+$ model is defined on the same graph structure as the PT and LT models and uses the same initial parameters, but replaces PT’s dynamic reinforcement mechanism with a static, worst-case amplification applied uniformly at the start of the diffusion process.

Let $G=(V,E)$ be a directed graph, where $V$ is the set of nodes and $E \subseteq V \times V$ is the set of directed edges. Each edge $(u,v)\in E$ is assigned a baseline weight $w_{uv} \in [0,1]$ such that $\sum_{u \in N(v)} w_{uv} \le 1$ for each node $v$, where $N(v)$ denotes the in-neighbors of $v$. As in our earlier analyses, we initialize each weight according to the weighted-cascade convention, $w_{uv}=1/\text{in-degree}(v)$. Each node $v \in V$ is assigned an activation threshold $\theta_v \in (0,1]$.

The key observation underlying $M^+$ is that under PT diffusion, the total incoming pressure experienced by a node at activation is bounded above by $1$ under the weighted-cascade convention. This construction provides a worst-case envelope of reinforcement effects, ensuring that $M^+$ captures the strongest possible amplification consistent with the PT mechanism. Consequently, the maximum possible amplification applied to any outgoing edge $(u,v)$ during a PT cascade is $\alpha$. The $M^+$ model, therefore, replaces the dynamic, activation-triggered updates of PT with a static pre-amplified weight defined as
\[
w_{uv}^+ = \min\{1,\; w_{uv} + \alpha\}.
\]

Diffusion under the $M^+$ model proceeds exactly as in the Linear Threshold framework, with the exception that static pre-amplified weights $w_{uv}^+$ are used throughout the process. At each time step, an inactive node $v$ becomes active if the cumulative influence from its active in-neighbors meets or exceeds its threshold:
\[
\sum_{u \in P(v)} w_{uv}^+ \;\ge\; \theta_v.
\]
As in LT and PT, the process is progressive and terminates when no further nodes activate.

\begin{lemma}\label{lem:mplus-dominates}
The $M^+$ model dominates the PT model in expected influence: for all seed sets $S \subseteq V$,
\[
\sigma_{\mathrm{PT}}(S) \;\le\; \sigma_{M^+}(S).
\]
\end{lemma}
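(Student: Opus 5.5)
We prove the inequality pointwise, by a coupling argument, and then take expectations. Fix a seed set $S$ and one realization of the thresholds $\{\theta_v\}_{v\in V}$. Run the PT process and the $M^+$ process from $S$ with these same thresholds. Let $A_t$ and $A_t^+$ denote their active sets after round $t$. If $A_t \subseteq A_t^+$ holds for every $t$, then the final PT active set is contained in the final $M^+$ active set. Taking expectations over the common threshold distribution then gives $\sigma_{\mathrm{PT}}(S)\le\sigma_{M^+}(S)$.

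The first step is a weight comparison. Let $w^{(t)}_{uv}$ be the current PT weight of edge $(u,v)$ at the start of round $t$. We claim $w^{(t)}_{uv}\le w^+_{uv}$ for all $t$. Each PT weight is modified at most once, when $u$ activates through diffusion, and it becomes $\min\{1,w_{uv}+\alpha I_u\}$. The incoming pressure satisfies $I_u=\sum_{x\in P(u)} w_{xu}$, and these are the original, unamplified weights. The paper only guarantees "pre-activation" weights here, so I must check this point. If $I_u$ could instead be computed from already-amplified weights, the bound $I_u\le 1$ could fail. In that case I would fall back on the $\min\{1,\cdot\}$ cap together with the paper's stated premise that the pressure at activation is bounded by $1$. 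Granting $I_u\le\sum_{x\in N(u)}w_{xu}\le 1$, monotonicity of $x\mapsto\min\{1,w_{uv}+x\}$ gives $w^{(t)}_{uv}\le\min\{1,w_{uv}+\alpha\}=w^+_{uv}$. Seeds and nodes that are never amplified keep $w_{uv}\le w^+_{uv}$ trivially.

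The second step is an induction on $t$. The base case holds because $A_0=A_0^+=S$. Assume $A_t\subseteq A_t^+$, and let $v$ activate in PT at round $t+1$. If $v\in A_t^+$ already, there is nothing to show. Otherwise,
\[
\theta_v\le\sum_{u\in A_t,(u,v)\in E} w^{(t)}_{uv}\le\sum_{u\in A_t^+,(u,v)\in E} w^+_{uv}.
\]
The second inequality uses the inductive hypothesis, the nonnegativity of the weights, and the first step. Hence $v$ activates in $M^+$ by round $t+1$, so $A_{t+1}\subseteq A_{t+1}^+$.

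The main obstacle is the first step, in particular the bound $I_u\le 1$. The concern is that PT pressure might be computed from amplified weights of earlier adopters: the algorithm sums the current $w_{uv}$, and these may already have been increased. If so, the pressure could exceed $1$, and $w^{(t)}$ might exceed $w^+$ when $w_{uv}+\alpha<1$. To handle this I would verify the paper's convention that $I_v$ uses baseline weights. If instead $I_v$ uses current weights, I would redefine the envelope as $w^+_{uv}=\min\{1,w_{uv}+\alpha\,\bar I\}$, where $\bar I$ bounds the total current incoming weight. For example, $\bar I=\max_v\sum_{u\in N(v)}\min\{1,w_{uv}+\alpha\}$ works, since the amplified weights are themselves dominated by this envelope. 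The rest of the coupling goes through unchanged.
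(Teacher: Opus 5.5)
Your main line (coupling both processes on one threshold draw, showing $w^{(t)}_{uv}\le w^+_{uv}$, inducting on rounds, then taking expectations) is the paper's proof with the weight comparison made explicit. It is correct under the premise $I_u\le 1$, which is exactly the paper's ``key observation.''

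Your suspicion about that premise is justified. If $I_v$ is summed over current weights, as Algorithm~\ref{alg:pressure-threshold} literally does, the stated lemma is false. Take seed $a$; nodes $x_1,x_2$ whose only in-neighbor is $a$; a node $u$ with in-neighbors $x_1,x_2$; and a node $s$ with in-neighbors $u$ and an isolated source $d$. Use weighted-cascade weights and $\alpha=1/4$. In PT, $x_1,x_2$ activate with pressure $1$, lifting $w_{x_iu}$ to $3/4$. Then $u$ activates with $I_u=3/2$, lifting $w_{us}$ to $7/8$, whereas $w^+_{us}=3/4$. Hence $\sigma_{\mathrm{PT}}(\{a\})=4+7/8>4+3/4=\sigma_{M^+}(\{a\})$. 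So the statement, the paper's proof, and your main line are valid only under the baseline-weight reading of $I_v$.

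The genuine gap is your fallback for the current-weight reading. Your claim that $\bar I=\max_v\sum_{u\in N(v)}\min\{1,w_{uv}+\alpha\}$ bounds every PT pressure is circular. $\bar I$ is computed as if in-edges were lifted by at most $\alpha$, but in PT they are lifted by $\alpha I_u$ with $I_u$ possibly as large as $\bar I>1$, so pressure compounds along paths.

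To see it fail, stack the gadget above twice: $a\to x_1,x_2\to u_1,u_2\to z\to s\leftarrow d$, with all four edges $x_i\to u_j$, again with $\alpha=1/4$. Every node has weighted-cascade in-weights $1/2$ or $1$, so $\bar I=3/2$. Yet $I_{u_j}=3/2$ gives $I_z=7/4$, and the PT weight on $(z,s)$ becomes $15/16>7/8=\min\{1,1/2+\alpha\bar I\}$. Your redefined model therefore activates $s$ with probability $7/8$, against $15/16$ under PT.

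What you need is a self-consistent bound: any $c\ge 0$ with $\max_v\sum_{u\in N(v)}\min\{1,w_{uv}+\alpha c\}\le c$. Examples are $c$ equal to the maximum in-degree $\Delta_{\mathrm{in}}$, or $c=1/(1-\alpha\Delta_{\mathrm{in}})$ when $\alpha\Delta_{\mathrm{in}}<1$ and in-weights sum to at most $1$. With such a $c$ you can prove $I_v\le c$ by induction on activation round. Each PT in-weight of $v$ is either baseline or $\min\{1,w_{uv}+\alpha I_u\}$ with $u$ activated strictly earlier. Your coupling then goes through with $w^+_{uv}=\min\{1,w_{uv}+\alpha c\}$, though that is a different upper-bound model from the $M^+$ in the statement.
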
 

\proof \\
Fix an arbitrary seed set $S$ and a realization of thresholds $\{\theta_v\}_{v \in V}$. Run both the PT and $M^+$ diffusion processes on the same graph with these thresholds.

Let $A_t^{\mathrm{PT}}$ and $A_t^{M^+}$ denote the active sets at time $t$ under PT and $M^+$, respectively. We prove by induction on $t$ that $A_t^{\mathrm{PT}} \subseteq A_t^{M^+}$.

In the base case, $A_0^{\mathrm{PT}} = A_0^{M^+} = S$. Assume $A_t^{\mathrm{PT}} \subseteq A_t^{M^+}$. Consider any node $v \notin A_t^{\mathrm{PT}}$. Under PT, $v$ activates at time $t+1$ if the sum of dynamic incoming weights from $A_t^{\mathrm{PT}}$ meets or exceeds $\theta_v$. Under $M^+$, $v$ has at least as many active parents, and each corresponding incoming weight satisfies $w_{uv}^+ \ge w'_{uv}$. Therefore, any node that activates under PT must also activate under $M^+$.

Thus $A_{t+1}^{\mathrm{PT}} \subseteq A_{t+1}^{M^+}$, and by induction the inclusion holds at termination. Since this holds pointwise for every threshold realization, it also holds in expectation, implying $\sigma_{\mathrm{PT}}(S) \le \sigma_{M^+}(S)$. This establishes that $M^+$ provides a valid upper bound on PT diffusion for all seed sets.
\endproof

\begin{remark}
Because the $M^+$ model uses static edge weights and follows the Linear Threshold activation rule, its influence function $\sigma_{M^+}(S)$ is monotone and submodular \citep{kempe2003maximizing}. As such, $\sigma_{M^+}$ admits a $(1-1/e)$ approximation guarantee via greedy maximization.
\end{remark}

\subsubsection{Sandwich Approximation Formulation}

We now formalize the sandwich approximation strategy. Let $\sigma_{\mathrm{LT}}$ denote the influence function under the classical Linear Threshold model, $\sigma_{\mathrm{PT}}$ under the Pressure Threshold model, and $\sigma_{M^+}$ under the static maximum-pressure model. For any seed set $S \subseteq V$, these influence functions satisfy
\[
\sigma_{\mathrm{LT}}(S) \;\le\; \sigma_{\mathrm{PT}}(S) \;\le\; \sigma_{M^+}(S).
\]

Thus, PT diffusion is sandwiched between two monotone submodular processes that serve as tractable lower and upper bounds.

Since both $\sigma_{\mathrm{LT}}$ and $\sigma_{M^+}$ are monotone submodular, we may run the standard greedy algorithm on each to obtain seed sets $S_{\mathrm{LT}}$ and $S_{M^+}$ with $(1-1/e)$ approximation guarantees. We additionally consider the seed set $S_{\mathrm{PT}}$ obtained by running greedy directly on $\sigma_{\mathrm{PT}}$.

The sandwich strategy selects the final seed set as
\[
S_{\mathrm{sand}}
\;=\;
\arg\max_{S \in \{S_{\mathrm{LT}},\, S_{\mathrm{PT}},\, S_{M^+}\}} \;
\sigma_{\mathrm{PT}}(S).
\]

\subsubsection{Approximation Guarantee}

Let $S_{\mathrm{PT}}^\star$ denote an optimal PT seed set. By standard sandwich arguments \citep{Lu2015}, the selected seed set satisfies
\[
\sigma_{\mathrm{PT}}(S_{\mathrm{sand}})
\;\ge\;
(1-1/e)
\cdot
\max\!\left\{
\frac{\sigma_{\mathrm{PT}}(S_{M^+})}{\sigma_{M^+}(S_{M^+})},\;
\frac{\sigma_{\mathrm{LT}}(S_{\mathrm{PT}}^\star)}{\sigma_{\mathrm{PT}}(S_{\mathrm{PT}}^\star)}
\right\}
\cdot
\sigma_{\mathrm{PT}}(S_{\mathrm{PT}}^\star).
\]

While the second term depends on the unknown optimum $S_{\mathrm{PT}}^\star$, the first term is fully computable. This yields a practical, data-dependent approximation guarantee of
\[
(1-1/e)\cdot
\frac{\sigma_{\mathrm{PT}}(S_{M^+})}{\sigma_{M^+}(S_{M^+})},
\]
which can be evaluated efficiently using Monte Carlo estimation of influence.

This framework demonstrates that even in the absence of classical structural guarantees, reinforcement-aware diffusion models can still support principled and effective decision-making.

\vspace{.1in} \section{Computational Experiments}
\label{sec:experiments}

We now evaluate the implications of pressure-based reinforcement for diffusion dynamics and network-based interventions. Our experiments are designed to address three central questions: 
(i) how reinforcement alters seed selection in influence maximization, 
(ii) how it impacts realized diffusion outcomes across different network structures, and 
(iii) how sensitive these effects are to the strength of social reinforcement.

Consistent with the decision-oriented perspective of this paper, our goal is not only to compare the Pressure Threshold (PT) model with the Linear Threshold (LT) model, but also to understand how reinforcement reshapes the effectiveness of intervention strategies in practice. We therefore emphasize both behavioral differences in diffusion and their consequences for decision quality. In particular, our experiments are designed to assess when accounting for reinforcement leads to meaningfully different intervention decisions compared to classical diffusion assumptions.

Across all experiments, we follow a common methodology. We evaluate diffusion on a combination of real-world and synthetic networks, use Monte Carlo simulation to estimate influence, and compare PT against LT under matched conditions to isolate the effect of reinforcement. Unless otherwise stated, all results reflect progressive diffusion processes and average outcomes across repeated simulations. All comparisons are conducted under identical initial conditions to ensure that differences in outcomes are driven solely by the presence of reinforcement.

\subsection{Network Data}

We evaluate the Pressure Threshold (PT) model against the Linear Threshold (LT) model on four networks spanning distinct topologies and application domains. Three of these are real-world networks drawn from the Stanford Large Network Dataset Collection (SNAP) \cite{snapnets}: 
(i) Facebook social circles \cite{leskovec2012learning}, 
(ii) Wikipedia voting \cite{leskovec2010predicting,leskovec2010signed}, and 
(iii) Bitcoin OTC trust \cite{kumar2016edge,kumar2018rev2}. 
To complement these datasets, we also include a synthetic Erd\H{o}s--R\'enyi random network \(G(n,p)\) \cite{erdds1959random}.

These networks were selected to capture variation in structural properties that are expected to interact with reinforcement, including density, clustering, and directionality of influence. These properties directly influence the extent to which nodes are exposed to repeated influence, a key driver of amplification under the PT model. In particular, denser networks provide more opportunities for repeated exposure and therefore greater potential for amplification under PT. 

Table~\ref{tab:basic_info} summarizes key statistics for each network, including the number of nodes, number of edges, and edge-to-node ratio. This variation allows us to examine how the effects of reinforcement depend on underlying network structure.

\begin{table}[h] 
    \centering
    \begin{tabular}{l c c c}
        \hline \hline
        Network & Nodes & Edges & Edge/Node Ratio\\
        \hline
        Facebook      & 4{,}039 & 88{,}234  & 21.846 \\
        Bitcoin       & 5{,}881 & 35{,}592  & 6.052  \\
        Wikipedia     & 7{,}115 & 103{,}689 & 14.573 \\
        Erd\H{o}s--R\'enyi  & 5{,}000 & 62{,}597  & 12.519 \\
        \hline
        \hline
    \end{tabular}
    \caption{Basic statistics of the networks used in experiments.}
    \label{tab:basic_info}
\end{table}

\paragraph{Graph construction and directionality.}
The four networks differ in how edges encode influence relationships, and we standardize them to a unified directed framework consistent with threshold-based diffusion. 

The Facebook social circles dataset is originally undirected, with an edge indicating mutual friendship. Following standard practice in diffusion modeling \cite{kempe2003maximizing}, we convert each undirected edge $\{u,v\}$ into two directed edges $(u,v)$ and $(v,u)$, preserving symmetric opportunities for influence. 

The Bitcoin OTC trust and Wikipedia voting networks are inherently directed. In the Bitcoin network, an edge $(u,v)$ represents a trust rating assigned by user $u$ to user $v$, while in the Wikipedia network, an edge $(u,v)$ indicates that user $u$ voted for user $v$ in an administrative election. These interpretations naturally align with directed influence flow.

For the Erd\H{o}s--R\'enyi network, we generate a random graph $G(n,p)$ with $n=5{,}000$ and $p=0.005$, and similarly convert each undirected edge into two directed edges. This provides a controlled baseline without community structure, allowing us to isolate the role of network topology in shaping reinforcement effects.

\paragraph{Edge weights and normalization.}
We adopt the weighted-cascade convention \cite{kempe2003maximizing}, assigning each directed edge $(u,v)$ the weight
\[
w_{uv} \;=\; \frac{1}{\text{in-degree}(v)}.
\]
This normalization ensures that the total incoming influence to any node is bounded by one, i.e., $\sum_{u \in N(v)} w_{uv} \le 1$, which is consistent with the interpretation of thresholds as adoption constraints. This normalization is particularly important under PT, as it ensures that reinforcement operates within a bounded and interpretable range of influence.

Under the Pressure Threshold (PT) model, these baseline weights evolve endogenously through the reinforcement mechanism described in Section~\ref{sec:model}. In particular, when a node activates, its outgoing influence weights may increase as a function of the incoming pressure it experienced. All updates respect the cap $w_{uv} \le 1$, ensuring comparability with LT and interpretability of influence magnitudes.

\paragraph{Thresholds and stochasticity.}
In all experiments, node thresholds $\theta_v$ are drawn independently from the uniform distribution on $(0,1]$, following the standard assumption in threshold-based diffusion models. This choice provides a neutral baseline that does not favor either early or late adoption and allows comparisons across models to focus on differences in influence dynamics.

To ensure fair comparisons between PT and LT, we use matched randomness: for any given seed set and trial, the same realization of thresholds is used under both models. This pairing isolates the impact of reinforcement by controlling for stochastic variation in activation thresholds. As a result, differences between PT and LT can be interpreted causally as arising from the reinforcement mechanism rather than stochastic variation.

All reported influence values are computed as averages over independent Monte Carlo simulations. Unless otherwise specified, each estimate is based on 1{,}000 trials, which provides stable estimates of expected spread and allows consistent comparison across networks and parameter settings.

\paragraph{Interpretation for decision-making.}
This experimental design isolates the effect of reinforcement while holding the underlying network structure and baseline influence assumptions fixed. As a result, differences in observed diffusion outcomes can be attributed directly to the presence or absence of pressure-based amplification.

For decision makers, this setup provides a direct analogue of intervention strategies designed under static influence assumptions (LT) versus those that account for conviction and reinforcement (PT). The experiments, therefore, provide insight not only into diffusion behavior but also into how modeling choices affect the evaluation and selection of network interventions.

\subsection{Experimental Design}

To understand how reinforcement affects both diffusion outcomes and intervention strategies, we organize our empirical analysis around three complementary questions:

\begin{enumerate}
    \item \textbf{How does reinforcement alter seed selection?} 
    Do strategies optimized under PT select different nodes compared to LT, and if so, how do these differences emerge?

    \item \textbf{How does reinforcement impact diffusion outcomes?}
    For a given budget, how much additional influence does PT generate relative to LT across networks with different structural properties?

    \item \textbf{How sensitive are outcomes to the strength of reinforcement?}
    How does variation in the amplification parameter $\alpha$ affect the magnitude and stability of diffusion outcomes?
\end{enumerate}

Together, these questions allow us to disentangle how reinforcement affects both the selection of intervention targets and the realized impact of those interventions.

We design a set of experiments to isolate each of these questions while maintaining consistent assumptions across models.

\paragraph{Seed selection under reinforcement.}
To examine how reinforcement reshapes intervention strategies, we compare seed sets obtained under the PT and LT models on the Facebook network. Specifically, we solve the influence maximization problem using the CELF algorithm \cite{leskovec2007cost} with a seed budget of $k=20$. 

Influence is estimated using $1{,}000$ Monte Carlo simulations per marginal gain evaluation. For PT, we set $\alpha = 0.1$, which empirically provides a regime in which reinforcement effects are visible but do not induce immediate saturation. To ensure comparability, we use the same threshold realizations for both PT and LT during seed selection. 

Rather than focusing only on final influence values, we track the full sequence of node selections made by CELF. This allows us to examine how reinforcement affects marginal gains over time and where divergences between models emerge. In particular, it allows us to identify when reinforcement begins to alter marginal contributions in ways that deviate from classical diminishing-returns intuition.

\paragraph{Influence as a function of budget.}
To quantify the impact of reinforcement on realized diffusion outcomes, we evaluate influence as a function of the seed budget across all four networks. For each budget $k \in \{1,2,\dots,60\}$, we run CELF under three settings: LT, PT with $\alpha = 0.001$, and PT with $\alpha = 0.005$.

Each influence evaluation during the optimization process is computed using $1{,}000$ Monte Carlo simulations. The selected values of $\alpha$ are chosen to illustrate meaningful deviations from LT without inducing trivial full-network cascades, particularly in denser graphs.

This experiment allows us to characterize how reinforcement shifts the return on additional seeds and how these effects depend on network topology.

\begin{remark}[Robustness to threshold distributions]
We evaluated alternative threshold distributions, including Normal$(0.5,0.1)$ and Beta distributions with parameters $(2,5)$ and $(5,2)$. Across all cases, the qualitative differences between PT and LT remained consistent. For brevity, we report results under the standard uniform prior.
\end{remark}

\paragraph{Sensitivity to the reinforcement parameter.}
To isolate the effect of reinforcement strength, we fix a seed set size of $k=10$ on the Facebook network and vary the amplification parameter $\alpha$ over the interval $[10^{-4}, 10^{-1}]$ using a fine grid.

For each value of $\alpha$, we estimate influence using $1{,}000$ Monte Carlo simulations and report the average cascade size. To reduce high-frequency variability due to stochastic thresholds, we apply a centered moving average with a window of nine grid points.

This analysis provides a direct view of how amplification strength translates into incremental influence, allowing us to assess whether reinforcement effects scale smoothly or exhibit threshold-like behavior.

\subsection{Seed Selection under Reinforcement}

We begin by examining how reinforcement alters the composition of optimal seed sets. Figure~\ref{fig:timeline} and Table~\ref{tab:timeline} compare the sequence of nodes selected by the CELF algorithm under the LT and PT models on the Facebook network for a budget of $k=20$.

\begin{figure*}[htbp!]
\centering
\includegraphics[width=\columnwidth,height=4.9cm]{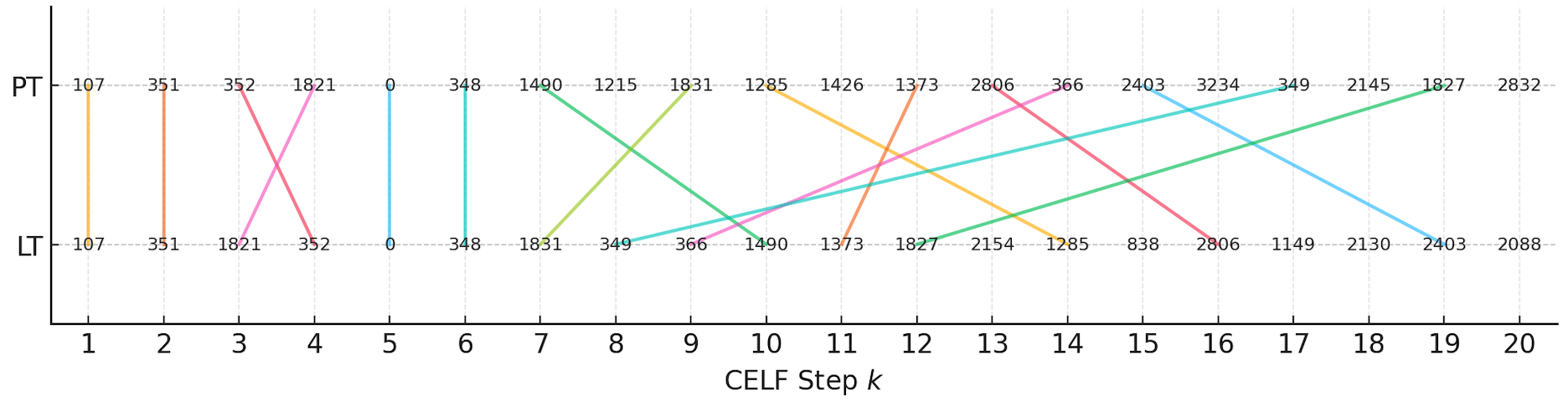}
\caption{CELF seed-selection timeline for the Facebook network (PT vs.\ LT, \(k=20\)). Horizontal position shows the selection step; vertical links mark common seeds. Early overlap is followed by divergence as PT accounts for reinforcement effects.}
\label{fig:timeline}
\end{figure*}

\begin{table}[h!]
\centering
\begin{tabular}{p{0.20\columnwidth} p{0.64\columnwidth}}
\hline\hline
Model & Seeds selected (in order) \\
\hline
Linear Threshold   & 107, 351, 1821, 352, 0, 348, 1831, 349, 366, 1490, 1373, 1827, 2154, 1285, 838, 2806, 1149, 2130, 2403, 2088 \\
\hline
Pressure Threshold & 107, 351, 352, 1821, 0, 348, 1490, 1215, 1831, 1285, 1426, 1373, 2806, 366, 2403, 3234, 349, 2145, 1827, 2832 \\
\hline
\hline
\end{tabular}\vspace{10pt}
\caption{Ordered seed sets returned by CELF for the Facebook network (\(k=20\)).}
\label{tab:timeline}
\end{table}

The two models exhibit strong agreement in the early stages of the selection process. The first several nodes selected under PT and LT largely coincide, indicating that nodes with strong intrinsic influence, those capable of generating substantial spread independently, remain important even when reinforcement is present.

However, meaningful divergence emerges in later selection steps. Beginning around iteration $8$, PT selects nodes that do not appear in the LT seed set, while omitting several nodes that are included under LT. This divergence reflects a fundamental shift in how marginal gains are evaluated under reinforcement.

Under the PT model, the marginal contribution of a node depends not only on its direct influence, but also on the extent to which it enables amplification through its neighborhood. Nodes embedded in densely connected regions are more likely to receive substantial incoming pressure before activation, which in turn increases their outgoing influence after activation. As a result, PT tends to favor nodes that participate in tightly knit local structures, where reinforcement can accumulate. This reflects the complementarity effects identified in Section~\ref{sec:decision}, where the value of a node depends on its interaction with previously selected seeds.

In contrast, the LT model evaluates nodes solely on static influence weights, without accounting for downstream amplification. This leads to a preference for nodes with high immediate reach, even if their neighborhoods do not support strong reinforcement effects.

\paragraph{Implications for intervention strategy.}
These results indicate that reinforcement can fundamentally alter optimal seeding strategies. While high-degree or high-centrality nodes remain relevant, PT shifts emphasis toward nodes that are positioned to benefit from coordinated activation and repeated exposure.

From a managerial perspective, this suggests that effective interventions in reinforcement-driven settings should prioritize not just influential individuals, but also cohesive groups in which social pressure can accumulate. Strategies based solely on static centrality measures may fail to identify nodes that are critical for triggering amplification-driven cascades.

\subsection{Influence as a Function of Budget}

We next examine how reinforcement affects realized diffusion outcomes as a function of intervention budget. Figures~\ref{fig:FB}--\ref{fig:Rand} plot the expected influence spread as the seed budget $k$ increases across the four networks under both LT and PT.

\begin{figure}[t]
    \centering
    \begin{minipage}{.48\textwidth}
    \centering
    \resizebox{\linewidth}{!}{%
        \begin{tikzpicture}
\definecolor{color0}{rgb}{0.12156862745098,0.466666666666667,0.705882352941177}
\definecolor{color1}{rgb}{1,0.498039215686275,0.0549019607843137}
\definecolor{color2}{rgb}{0.172549019607843,0.627450980392157,0.172549019607843}
\begin{axis}[
legend cell align={left},
reverse legend,
legend style={
  fill opacity=0.8,
  draw opacity=1,
  text opacity=1,
  at={(0.97,0.25)},
  anchor=east,
  draw=white!80!black
},
tick align=outside,
tick pos=left,
x grid style={white!69.0196078431373!black},
xlabel={Budget \(\displaystyle k\)},
xmajorgrids,
xmin=0, xmax=61,
xtick style={color=black},
y grid style={white!69.0196078431373!black},
ylabel style={yshift=5pt},
ylabel={Influence},
ymajorgrids,
ymin=0, ymax=3600,
ytick style={color=black}
]
\addplot [thick, color0, mark=*, mark size=.5, mark options={solid}]
table {%
1 334.642
2 608.443
3 833.666
4 1020.248
5 1181.6
6 1262.681
7 1334.467
8 1384.669
9 1430.347
10 1472.063
11 1515.548
12 1551.514
13 1589.55
14 1626.002
15 1656.765
16 1687.273
17 1713.757
18 1739.178
19 1763.244
20 1787.961
21 1812.06
22 1836.995
23 1858.174
24 1879.514
25 1901.686
26 1922.769
27 1942.921
28 1968.146
29 1989.563
30 2009.174
31 2028.747
32 2047.13
33 2066.004
34 2084
35 2101.559
36 2118.451
37 2135.81
38 2152.409
39 2169.873
40 2186.079
41 2202.106
42 2217.497
43 2232.437
44 2249.494
45 2264.231
46 2278.804
47 2292.923
48 2307.515
49 2322.161
50 2335.654
51 2348.992
52 2361.98
53 2374.705
54 2387.533
55 2400.321
56 2412.639
57 2425.011
58 2437.552
59 2449.382
60 2461.219
};
\addlegendentry{Linear Threshold ($\alpha=0$)}
\addplot [thick, color1, mark=*, mark size=.5, mark options={solid}]
table {%
1 391.72
2 688.303
3 922.964
4 1151.397
5 1312.088
6 1400.988
7 1472.899
8 1537.7
9 1592.164
10 1645.537
11 1695.249
12 1743.071
13 1790.783
14 1834.418
15 1873.344
16 1909.028
17 1946.952
18 1981.673
19 2015.471
20 2044.356
21 2073.682
22 2101.406
23 2131.614
24 2158.072
25 2184.315
26 2208.925
27 2233.526
28 2259.445
29 2284.054
30 2307.706
31 2330.674
32 2352.734
33 2373.766
34 2394.842
35 2414.687
36 2433.409
37 2452.353
38 2471.973
39 2491.125
40 2508.848
41 2525.745
42 2545.317
43 2562.675
44 2579.178
45 2597.966
46 2613.8
47 2631.358
48 2645.633
49 2661.032
50 2675.743
51 2689.133
52 2703.465
53 2717.398
54 2731.408
55 2744.001
56 2756.562
57 2769.202
58 2782.859
59 2797.137
60 2810.784
};
\addlegendentry{Pressure Threshold ($\alpha=0.001$)}
\addplot [thick, color2, mark=*, mark size=.5, mark options={solid}]
table {%
1 660.579
2 1078.835
3 1455.629
4 1727.461
5 1907.756
6 2033.591
7 2137.707
8 2222.465
9 2310.523
10 2376.827
11 2439.832
12 2497.333
13 2545.548
14 2585.35
15 2629.415
16 2666.051
17 2703.345
18 2739.729
19 2775.337
20 2810.059
21 2839.22
22 2867.473
23 2896.07
24 2922.519
25 2947.255
26 2974.572
27 2999.281
28 3023.28
29 3046.049
30 3067.716
31 3090.573
32 3110.024
33 3130.296
34 3150.115
35 3169.333
36 3190.846
37 3210.373
38 3225.988
39 3243.655
40 3258.229
41 3272.752
42 3287.554
43 3301.066
44 3314.712
45 3329.657
46 3344.561
47 3356.026
48 3368.143
49 3379.119
50 3393.093
51 3403.333
52 3413.243
53 3423.035
54 3432.982
55 3445.419
56 3455.29
57 3464.391
58 3473.042
59 3481.419
60 3490.28
};
\addlegendentry{Pressure Threshold ($\alpha = 0.005$)}
\end{axis}
\end{tikzpicture}
    }
    \caption{Influence vs.\ budget for the Facebook network.}
    \label{fig:FB}
    \end{minipage}%
    \hfill
    \begin{minipage}{.48\textwidth}
    \centering
    \resizebox{\linewidth}{!}{%
        \begin{tikzpicture}
\definecolor{color0}{rgb}{0.12156862745098,0.466666666666667,0.705882352941177}
\definecolor{color1}{rgb}{1,0.498039215686275,0.0549019607843137}
\definecolor{color2}{rgb}{0.172549019607843,0.627450980392157,0.172549019607843}
\begin{axis}[
legend cell align={left},
reverse legend,
legend style={
  fill opacity=0.8,
  draw opacity=1,
  text opacity=1,
  at={(0.97,0.25)},
  anchor=east,
  draw=white!80!black
},
tick align=outside,
tick pos=left,
x grid style={white!69.0196078431373!black},
xlabel={Budget \(\displaystyle k\)},
xmajorgrids,
xmin=0, xmax=61,
xtick style={color=black},
y grid style={white!69.0196078431373!black},
ylabel style={yshift=5pt},
ylabel={Influence},
ymajorgrids,
ymin=0, ymax=7500,
ytick style={color=black}
]
\addplot [thick, color0, mark=*, mark size=.5, mark options={solid}]
table {%
1 724.446
2 1169.028
3 1565.53
4 1921.167
5 2240.728
6 2490.144
7 2710.867
8 2883.476
9 3057.97
10 3220.825
11 3354.994
12 3481.557
13 3592.957
14 3695.593
15 3801.452
16 3891.044
17 3978.995
18 4068.885
19 4151.053
20 4228.357
21 4300.316
22 4374.764
23 4440.992
24 4502.643
25 4560.075
26 4613.92
27 4669.627
28 4730.352
29 4780.217
30 4829.166
31 4874.286
32 4916.201
33 4959.251
34 5000.799
35 5043.995
36 5080.665
37 5116.117
38 5152.02
39 5189.203
40 5221.273
41 5254.385
42 5286.392
43 5315.155
44 5344.004
45 5372.311
46 5401.982
47 5429.512
48 5458.084
49 5485.69
50 5511.501
51 5536.72
52 5560.478
53 5582.978
54 5606.244
55 5627.405
56 5648.86
57 5668.739
58 5689.415
59 5707.882
60 5725.608
};
\addlegendentry{Linear Threshold ($\alpha=0$)}
\addplot [thick, color1, mark=*, mark size=.5, mark options={solid}]
table {%
1 1054.106
2 1922.544
3 2697.091
4 3393.084
5 3958.991
6 4410.341
7 4726.157
8 5012.502
9 5279.115
10 5500.902
11 5661.572
12 5825.633
13 5987.494
14 6124.452
15 6269.464
16 6372.591
17 6466.473
18 6549.754
19 6620.768
20 6688.814
21 6760.987
22 6827.113
23 6879.977
24 6920.436
25 6957.018
26 6986.451
27 7017.48
28 7051.771
29 7082.021
30 7105.058
31 7115
32 7115
33 7115
34 7115
35 7115
36 7115
37 7115
38 7115
39 7115
40 7115
41 7115
42 7115
43 7115
44 7115
45 7115
46 7115
47 7115
48 7115
49 7115
50 7115
51 7115
52 7115
53 7115
54 7115
55 7115
56 7115
57 7115
58 7115
59 7115
60 7115
};
\addlegendentry{Pressure Threshold ($\alpha=0.001$)}
\addplot [thick, color2, mark=*, mark size=.5, mark options={solid}]
table {%
1 4773.406
2 5949.983
3 6604.243
4 6961.594
5 7115
6 7115
7 7115
8 7115
9 7115
10 7115
11 7115
12 7115
13 7115
14 7115
15 7115
16 7115
17 7115
18 7115
19 7115
20 7115
21 7115
22 7115
23 7115
24 7115
25 7115
26 7115
27 7115
28 7115
29 7115
30 7115
31 7115
32 7115
33 7115
34 7115
35 7115
36 7115
37 7115
38 7115
39 7115
40 7115
41 7115
42 7115
43 7115
44 7115
45 7115
46 7115
47 7115
48 7115
49 7115
50 7115
51 7115
52 7115
53 7115
54 7115
55 7115
56 7115
57 7115
58 7115
59 7115
60 7115
};
\addlegendentry{Pressure Threshold ($\alpha=0.005$)}
\end{axis}
\end{tikzpicture}
    }
    \caption{Influence vs.\ budget for the Wikipedia network.}
    \label{fig:Wiki}
    \end{minipage}
\end{figure}

\begin{figure}[t]
    \centering
    \begin{minipage}{.48\textwidth}
    \centering
    \resizebox{\linewidth}{!}{%
        \begin{tikzpicture}
\definecolor{color0}{rgb}{0.12156862745098,0.466666666666667,0.705882352941177}
\definecolor{color1}{rgb}{1,0.498039215686275,0.0549019607843137}
\definecolor{color2}{rgb}{0.172549019607843,0.627450980392157,0.172549019607843}
\begin{axis}[
legend cell align={left},
reverse legend,
legend style={
  fill opacity=0.8,
  draw opacity=1,
  text opacity=1,
  at={(0.97,0.25)},
  anchor=east,
  draw=white!80!black
},
tick align=outside,
tick pos=left,
x grid style={white!69.0196078431373!black},
xlabel={Budget \(\displaystyle k\)},
xmajorgrids,
xmin=0, xmax=61,
xtick style={color=black},
y grid style={white!69.0196078431373!black},
ylabel style={yshift=5pt},
ylabel={Influence},
ymajorgrids,
ymin=0, ymax=5200,
ytick style={color=black}
]
\addplot [thick, color0, mark=*, mark size=.5, mark options={solid}]
table {%
1 828.423
2 1210.393
3 1525.546
4 1798.743
5 2012.038
6 2192.477
7 2359.435
8 2510.854
9 2655.279
10 2766.19
11 2866.307
12 2963.699
13 3061.837
14 3144.634
15 3228.139
16 3303.154
17 3368.111
18 3427.513
19 3485.074
20 3538.395
21 3591.475
22 3640.215
23 3684.06
24 3728.124
25 3771.831
26 3816.506
27 3861.598
28 3898.594
29 3933.908
30 3967.226
31 3999.591
32 4032.339
33 4063.467
34 4092.953
35 4120.236
36 4147.395
37 4173.717
38 4197.457
39 4218.886
40 4240.268
41 4261.696
42 4283.131
43 4303.098
44 4323.809
45 4341.36
46 4360.538
47 4378.264
48 4394.572
49 4409.869
50 4425.074
51 4440.675
52 4455.477
53 4470.959
54 4484.948
55 4498.833
56 4512.495
57 4526.011
58 4539.067
59 4552.136
60 4565.969
};
\addlegendentry{Linear Threshold ($\alpha=0$)}
\addplot [thick, color1, mark=*, mark size=.5, mark options={solid}]
table {%
1 864.599
2 1274.755
3 1610.443
4 1890.631
5 2117.072
6 2307.742
7 2478.306
8 2631.372
9 2781.393
10 2895.744
11 3002.54
12 3102.612
13 3198.584
14 3284.893
15 3367.647
16 3440.966
17 3505.275
18 3565.901
19 3623.79
20 3674.454
21 3725.641
22 3772.414
23 3816.699
24 3860.671
25 3904.088
26 3950.613
27 3992.299
28 4027.321
29 4061.578
30 4094.259
31 4127.623
32 4158.537
33 4191.304
34 4217.246
35 4243.633
36 4268.515
37 4293.71
38 4317.123
39 4339.077
40 4360.589
41 4380.307
42 4399.322
43 4417.985
44 4434.989
45 4452.783
46 4469.303
47 4485.665
48 4503.364
49 4519.749
50 4534.898
51 4549.35
52 4564.905
53 4580.528
54 4594.44
55 4608.196
56 4622.271
57 4635.627
58 4649.094
59 4662.315
60 4673.804
};
\addlegendentry{Pressure Threshold ($\alpha=0.001$)}
\addplot [thick, color2, mark=*, mark size=.5, mark options={solid}]
table {%
1 1083.096
2 1592.946
3 2023.862
4 2344.98
5 2598.107
6 2823.061
7 3005.649
8 3191.511
9 3334.292
10 3463.046
11 3574.688
12 3677.572
13 3772.426
14 3850.951
15 3918.337
16 3981.264
17 4041.128
18 4091.113
19 4136.607
20 4180.012
21 4225.34
22 4266.772
23 4308.036
24 4348.923
25 4384.844
26 4418.224
27 4450.732
28 4481.898
29 4511.866
30 4539.837
31 4563.312
32 4589.278
33 4610.818
34 4631.291
35 4649.316
36 4666.999
37 4684.361
38 4700.332
39 4716.425
40 4732.569
41 4748.532
42 4763.632
43 4780.511
44 4794.649
45 4808.452
46 4821.548
47 4835.095
48 4849.015
49 4862.436
50 4874.262
51 4885.248
52 4896.737
53 4908.094
54 4918.627
55 4928.907
56 4938.522
57 4948.387
58 4958.293
59 4968.401
60 4977.789
};
\addlegendentry{Pressure Threshold ($\alpha=0.005$)}
\end{axis}
\end{tikzpicture}
    }
    \caption{Influence vs.\ budget for the Bitcoin network.}
    \label{fig:BC}
    \end{minipage}%
    \hfill
    \begin{minipage}{.48\textwidth}
    \centering
    \resizebox{\linewidth}{!}{%
        \begin{tikzpicture}
\definecolor{color0}{rgb}{0.12156862745098,0.466666666666667,0.705882352941177}
\definecolor{color1}{rgb}{1,0.498039215686275,0.0549019607843137}
\definecolor{color2}{rgb}{0.172549019607843,0.627450980392157,0.172549019607843}
\begin{axis}[
legend cell align={left},
reverse legend,
legend style={
  fill opacity=0.8,
  draw opacity=1,
  text opacity=1,
  at={(0.97,0.25)},
  anchor=east,
  draw=white!80!black
},
tick align=outside,
tick pos=left,
x grid style={white!69.0196078431373!black},
xlabel={Budget \(\displaystyle k\)},
xmajorgrids,
xmin=0, xmax=61,
xtick style={color=black},
y grid style={white!69.0196078431373!black},
ylabel style={yshift=5pt},
ylabel={Influence},
ymajorgrids,
ymin=0, ymax=5500,
ytick style={color=black}
]
\addplot [thick, color0, mark=*, mark size=.5, mark options={solid}]
table {%
1 236.911
2 496.616
3 702.837
4 876.178
5 1067.877
6 1232.833
7 1392.918
8 1546.086
9 1703.67
10 1847.804
11 1973.792
12 2105.853
13 2240.152
14 2368.532
15 2484.657
16 2599.259
17 2727.107
18 2824.584
19 2921.274
20 3011.234
21 3102.095
22 3214.637
23 3294.092
24 3387.032
25 3461.638
26 3546.502
27 3615.883
28 3694.263
29 3757.151
30 3834.451
31 3897.627
32 3957.677
33 4018.885
34 4084.961
35 4141.081
36 4197.002
37 4254.069
38 4317.531
39 4367.534
40 4418.623
41 4468.527
42 4516.362
43 4561.494
44 4612.05
45 4656.926
46 4701.074
47 4740.73
48 4779.572
49 4816.236
50 4852.233
51 4888.328
52 4921.855
53 4954.63
54 4988.449
55 5000.395
56 5000.381
57 5000.314
58 5000.528
59 5000.517
60 5000.673
};
\addlegendentry{Linear Threshold ($\alpha=0$)}
\addplot [thick, color1, mark=*, mark size=.5, mark options={solid}]
table {%
1 301.215
2 554.528
3 773.746
4 1007.202
5 1218.937
6 1403.755
7 1593.759
8 1789.017
9 1956.993
10 2129.5
11 2281.41
12 2438.065
13 2599.626
14 2744.491
15 2878.351
16 3012.517
17 3131.448
18 3253.815
19 3378.235
20 3508.045
21 3614.405
22 3714.713
23 3817.254
24 3919.004
25 4006.886
26 4090.31
27 4179.237
28 4259.916
29 4333.491
30 4406.899
31 4477.306
32 4549.979
33 4617.915
34 4680.458
35 4740.727
36 4800.141
37 4856.382
38 4910.448
39 4961.671
40 5000.792
41 5000.805
42 5000.75
43 5000.11
44 5000.744
45 5000.891
46 5000.681
47 5000.209
48 5000.999
49 5000.928
50 5000.607
51 5000.577
52 5000.774
53 5000.228
54 5000.876
55 5000.846
56 5000.896
57 5000.017
58 5000.106
59 5000.862
60 5000.469
};
\addlegendentry{Pressure Threshold ($\alpha=0.001$)}
\addplot [thick, color2, mark=*, mark size=.5, mark options={solid}]
table {%
1 409.414
2 767.34
3 1103.733
4 1442.252
5 1751.58
6 2033.052
7 2307.674
8 2537.749
9 2762.884
10 2985.4
11 3213.917
12 3401.161
13 3593.41
14 3775.471
15 3960.66
16 4127.566
17 4289.577
18 4432.05
19 4555.37
20 4678.688
21 4783.509
22 4887.983
23 4985.327
24 5000
25 5000
26 5000
27 5000
28 5000
29 5000
30 5000
31 5000
32 5000
33 5000
34 5000
35 5000
36 5000
37 5000
38 5000
39 5000
40 5000
41 5000
42 5000
43 5000
44 5000
45 5000
46 5000
47 5000
48 5000
49 5000
50 5000
51 5000
52 5000
53 5000
54 5000
55 5000
56 5000
57 5000
58 5000
59 5000
60 5000
};
\addlegendentry{Pressure Threshold ($\alpha=0.005$)}
\end{axis}
\end{tikzpicture}
    }
    \caption{Influence vs.\ budget for the Erd\H{o}s--R\'enyi network.}
    \label{fig:Rand}
    \end{minipage}
\end{figure}

Across all networks, the PT model produces consistently larger influence spreads than LT for the same budget. This confirms that reinforcement increases the effectiveness of seeding interventions not merely by accelerating diffusion, but by expanding the set of nodes that can ultimately be activated.

However, the magnitude of these gains varies substantially across networks, revealing an important interaction between reinforcement and network structure. In the Facebook and Wikipedia networks, which have relatively high edge-to-node ratios and pronounced clustering, PT yields significantly larger gains over LT. In contrast, the Bitcoin network, which is much sparser and less clustered, exhibits only modest improvements under PT.

This variation reflects the mechanism underlying pressure-based amplification. Reinforcement in the PT model depends on nodes receiving sufficient cumulative influence prior to activation. Dense and clustered networks provide multiple pathways for influence to accumulate, increasing the likelihood that nodes activate under high pressure and subsequently amplify their outgoing influence. In sparse networks, opportunities for such accumulation are limited, reducing the scope for reinforcement to affect diffusion.

The Erd\H{o}s--R\'enyi network provides a useful intermediate benchmark. Although random, its edge density enables reinforcement effects to accumulate more readily than in the Bitcoin network, leading to faster growth in influence under PT. In this setting, we observe that PT can reach full network activation at relatively low budgets compared to LT, illustrating how amplification can accelerate saturation in sufficiently connected graphs.

\paragraph{Shape of diffusion curves.}
Despite these differences in scale, the qualitative shapes of the influence curves remain similar across models. Both LT and PT exhibit diminishing returns as the budget increases, with early seeds generating larger marginal gains than later ones even in the absence of global submodularity under the PT model. This suggests that while reinforcement alters the magnitude of diffusion, it does not fundamentally eliminate saturation effects arising from overlapping neighborhoods and limited remaining reachable nodes.

At the same time, PT shifts the entire influence curve upward, effectively increasing the return on each seed within the regime where diffusion has not yet saturated. This upward shift is not uniform across networks, but is largest in those where structural conditions enable repeated exposure and reinforcement.

\paragraph{Implications for intervention design.}
These results highlight that the value of modeling reinforcement depends critically on network structure. In dense, highly connected environments such as online social platforms or tightly knit communities, ignoring reinforcement can lead to substantial underestimation of achievable influence and misallocation of seeding resources.

Conversely, in sparse or weakly connected networks, the incremental benefit of accounting for reinforcement may be limited, suggesting that simpler models such as LT may suffice for decision-making.

Overall, the results indicate that the effectiveness of influence interventions under PT is mediated by the interaction between seeding strategies and endogenous amplification, emphasizing the importance of aligning diffusion models with the structural characteristics of the target network.

\subsection{Sensitivity to Reinforcement Strength}

We finally examine how the strength of social reinforcement affects diffusion outcomes. Figure~\ref{fig:alpha} plots the expected influence spread under the PT model as a function of the amplification parameter $\alpha$ for the Facebook network with a fixed seed budget of $k=10$.

\begin{figure}[t]
\centering
\resizebox{.48\columnwidth}{!}{%
    \input{figures/icwsm/Varying_Alpha}
}
\caption{Effect of the reinforcement parameter $\alpha$ on influence spread (Facebook network, $k=10$). Values are smoothed using a centered moving average (window size 9).}
\label{fig:alpha}
\end{figure}

As expected, influence spread increases with the level of reinforcement. When $\alpha = 0$, the PT model reduces to the Linear Threshold model, yielding the baseline level of diffusion. As $\alpha$ increases, nodes that activate under social pressure exert greater downstream influence, leading to larger cascades. This effect arises because higher amplification increases the likelihood that nodes activate under sufficient pressure to trigger further reinforcement downstream.

The relationship between $\alpha$ and influence, however, is not purely linear. At low values of $\alpha$, increases in reinforcement produce gradual improvements in spread, reflecting incremental amplification of local influence. In an intermediate regime, the marginal impact of $\alpha$ becomes more pronounced, as reinforcement begins to systematically enable additional nodes to cross their activation thresholds. At higher values of $\alpha$, the curve begins to flatten, indicating diminishing returns to additional amplification as the network approaches saturation.

These regimes illustrate how the effect of reinforcement depends both on its magnitude and on the underlying network structure. When amplification is weak, diffusion remains close to the LT baseline. When amplification is moderate, reinforcement meaningfully expands the reachable portion of the network. Beyond a certain point, further increases in $\alpha$ primarily accelerate diffusion rather than expand its eventual reach.

\paragraph{Interpretation for model calibration.}
From a modeling perspective, these results highlight the importance of calibrating $\alpha$ to empirical data. Extremely small values may fail to capture observed reinforcement effects, while excessively large values may overstate amplification and lead to unrealistically fast or saturated cascades.

In practical settings, $\alpha$ can be interpreted as a behavioral sensitivity parameter governing how strongly individuals internalize social pressure. Estimating this parameter from observed cascade data, as explored in Section~\ref{sec:validation}, is therefore critical for accurate prediction and reliable decision support.

\paragraph{Implications for intervention design.}
For decision makers, the results suggest that the impact of reinforcement-aware strategies depends on the regime in which the system operates. In environments with moderate reinforcement, accounting for amplification can substantially increase the effectiveness of seeding interventions. In contrast, when reinforcement is either negligible or already saturated, the incremental benefit of adjusting strategies may be limited.

Overall, this analysis shows that reinforcement is not simply a binary feature of diffusion, but a tunable mechanism whose strength shapes both the scale and efficiency of influence propagation. More broadly, these findings highlight that reinforcement is not merely an incremental extension of classical diffusion but a mechanism that reshapes both the scale and structure of influence propagation, with direct implications for intervention design.

\vspace{.1in} \section{Empirical Evaluation of Approximation Guarantees}
\label{sec:greedy}

While Section~\ref{sec:model_properties} establishes theoretical bounds for influence maximization under the Pressure Threshold (PT) model, it is important to understand how these guarantees translate into practice. In this section, we empirically evaluate the sandwich approximation strategy on a real-world network and quantify the resulting data-dependent performance guarantees. In particular, we assess whether the theoretical relaxation introduced by the sandwich framework leads to practically useful guarantees when applied to realistic network structures.

\subsection{Experimental Setup}

To assess the effectiveness of the sandwich approximation framework, we conduct experiments on a high-activity subgraph derived from the Higgs Twitter dataset. This subgraph captures realistic structural properties while remaining computationally tractable for detailed analysis. The resulting graph contains $1{,}198$ nodes and $1{,}319$ edges, with an edge-to-node ratio of $1.101$.

We set the amplification parameter to $\alpha = 0.5$, representing a regime with substantial reinforcement, and fix the seed budget at $k = 5$. This choice places the system in a high-reinforcement regime where the gap between PT and its submodular surrogates is most pronounced. For each model, Linear Threshold (LT), Pressure Threshold (PT), and the upper-bound model ($M^+$), we apply the CELF algorithm to obtain corresponding seed sets $S_{LT}$, $S_{PT}$, and $S_{M^+}$. 

Each influence estimate is computed using $1{,}000$ Monte Carlo simulations with independently sampled thresholds. Following the sandwich approximation procedure, we evaluate all candidate seed sets under the PT model and select the seed set that achieves the largest expected influence.


\subsection{Results}

\begin{table}[h!]
    \centering
    \begin{tabular}{p{0.15\columnwidth} p{0.45\columnwidth} p{0.30\columnwidth}}
    \hline\hline
    Model & Seeds selected (in order) & Expected Spread under PT \\
    \hline
    LT   & 64911, 103447, 28951, 35729, 112673 & 245.493\\
    \hline
    PT & 35729, 112673, 142899, 232429, 339158 & 258.912\\
    \hline
    $M^+$ & 1276, 112673, 142899, 232429, 339158 & 231.674\\
    \hline
    \hline
    \end{tabular}
    \vspace{10pt}
    \caption{Seed sets obtained under LT, PT, and $M^+$ along with their expected influence under PT ($k=5$, $\alpha=0.5$).}
    \label{tab:sandwich-seeds}
\end{table}

The relative ordering of seed sets reflects the differing objectives optimized under each model, with PT capturing reinforcement-aware interactions that are absent in LT and $M^+$.

Table~\ref{tab:sandwich-seeds} reports the seed sets obtained under each model and their corresponding influence evaluated under PT. Among the candidate sets, the PT-derived seed set $S_{PT}$ achieves the highest expected spread, and is therefore selected as the final solution $S_{\text{sand}}$.

While this selection reflects empirical performance, the sandwich framework also enables a formal approximation guarantee through the upper-bound model $M^+$. Specifically, we compute the ratio between the PT and $M^+$ influence values for the seed set $S_{M^+}$:
\[
\frac{\sigma_{PT}(S_{M^+})}{\sigma_{M^+}(S_{M^+})} 
= \frac{231.67}{260.00} \approx 0.891.
\]

This ratio serves as a data-dependent relaxation factor. Intuitively, it measures how tightly the upper-bound model $M^+$ approximates PT for its own optimized seed set. Combining this with the classical $(1 - 1/e)$ guarantee for submodular maximization yields the bound:
\[
\sigma_{PT}(S_{\text{sand}}) 
\;\geq\; 0.891 \cdot (1 - 1/e) \cdot \sigma_{PT}(S_{PT}^*) 
\;\approx\; 0.563 \cdot \sigma_{PT}(S_{PT}^*).
\]

This guarantee holds without requiring monotonicity or submodularity of the PT influence function.

\paragraph{Interpretation.}
This result demonstrates that, despite the loss of monotonicity and submodularity, the sandwich approximation strategy yields strong performance guarantees in practice. The achieved bound of $56.3\%$ of the optimal influence is close to the classical $63.2\%$ guarantee for submodular influence maximization, indicating that the practical loss from reinforcement-induced non-submodularity is limited in this instance. From a decision-making perspective, this suggests that reinforcement-aware strategies can be deployed with performance guarantees that remain close to classical benchmarks.

More broadly, these findings suggest that worst-case theoretical limitations may overstate the practical difficulty of influence maximization under reinforcement. By combining tractable surrogate models with data-driven evaluation, the sandwich framework provides a principled and implementable approach for optimizing interventions in reinforcement-driven diffusion settings.

This empirical evaluation of the sandwich framework bridges the gap between theoretical guarantees and practical applicability, demonstrating that reinforcement-aware optimization can be both principled and effective in real-world settings.

\vspace{.1in} \section{Empirical Validation}
\label{sec:validation}

While the previous sections evaluate the behavioral and algorithmic implications of reinforcement, we now assess whether the Pressure Threshold (PT) model provides a more accurate description of real-world diffusion. To this end, we compare PT and the Linear Threshold (LT) model against observed social media cascades and evaluate their predictive performance. This comparison allows us to assess whether incorporating reinforcement leads to systematically improved alignment between modeled and observed diffusion.

\subsection{Data and Experimental Setup}

We conduct our analysis on the Higgs Twitter dataset \cite{DeDomenico2013}, which captures information diffusion surrounding the announcement of the Higgs boson discovery in July 2012. The dataset includes multiple interaction networks: retweets, replies, mentions, and follower relationships, and provides precise timestamps that allow for temporal validation of diffusion models.

We focus on the retweet network, where a user is considered active once they retweet a message. For each experiment, we extract a cascade by selecting a seed node and collecting all users activated through subsequent retweets. This produces a sequence of real-world activations over time.

To ensure that models are evaluated conservatively, we augment each cascade subgraph by adding 500 neighboring nodes from the global follower network that were not part of the observed cascade. This penalizes models that over-predict diffusion and ensures that predictive accuracy reflects both under- and over-estimation.

We construct 100 such subgraphs, each representing a distinct cascade. For each subgraph, we record structural properties including density, clustering, and seed centrality, to support the interpretation of model performance across different network environments. These variations allow us to examine whether reinforcement improves predictive accuracy uniformly or primarily in structurally dense and highly interactive settings.

\subsection{Simulation Protocol}

For each cascade subgraph, we simulate diffusion under both PT and LT using the observed seed node as the initial activation. Edge weights and thresholds follow the same conventions used in earlier sections.

To align model dynamics with real-world timing, we discretize time such that one simulation step corresponds to 10 hours of real-world activity. For each model and parameter setting, we estimate expected influence trajectories using $1{,}000$ independent Monte Carlo simulations. This ensures that differences in predictive performance reflect model structure rather than sampling variability.

To evaluate the role of reinforcement strength, we vary the amplification parameter over the range $\alpha \in [0,1.5]$ with step size $0.05$. For each value of $\alpha$, we compute the prediction error across all sampled cascades. This enables identification of the reinforcement regime that best aligns with observed diffusion behavior.

\subsection{Evaluation Metrics}

We measure predictive accuracy using the Normalized Mean Squared Error (NMSE) over time. For a cascade with $N$ nodes, the loss at time $t$ is given by:
\[
\text{NMSE}(t) = \left( \frac{\sigma_{\text{Actual}}(t_{\text{Actual}}) - \mathbb{E}[\sigma_{\text{Predicted}}(t)]}{N} \right)^2.
\]

This normalization facilitates comparison across cascades of different sizes.

Here, $\sigma_{\text{Actual}}$ denotes the cumulative number of activated nodes observed in the real cascade, and $\mathbb{E}[\sigma_{\text{Predicted}}(t)]$ is the expected number of activations in the simulated diffusion process.

We report the final prediction error averaged across all cascades for each value of $\alpha$.

\subsection{Results}

\begin{figure}[h!]
\centering
\resizebox{.48\columnwidth}{!}{%
    \begin{tikzpicture}
\definecolor{color0}{rgb}{0.12156862745098,0.466666666666667,0.705882352941177}
\definecolor{color1}{rgb}{1,0.498039215686275,0.0549019607843137}
\definecolor{color2}{rgb}{0.172549019607843,0.627450980392157,0.172549019607843}
\definecolor{color3}{rgb}{0.32,0.135,0.32}
\begin{semilogyaxis}[
legend cell align={left},
reverse legend,
legend style={
  fill opacity=0.8,
  draw opacity=1,
  text opacity=1,
  at={(0.97,0.25)},
  anchor=east,
  draw=white!80!black
},
tick align=outside,
tick pos=left,
x grid style={white!69.0196078431373!black},
xlabel={Time \(\displaystyle t\)},
xmajorgrids,
xmin=0, xmax=20,
xtick style={color=black},
y grid style={white!69.0196078431373!black},
ylabel={Influence},
ymajorgrids,
ymin=0, ymax=300,
ytick style={color=black}
]
\addplot [thick, color0, mark=*, mark size=.5, mark options={solid}]
table {%
1 1   
2 2   
3 2   
4 5  
5 19  
6 42  
7 49  
8 68  
9 87 
10 179 
11 208 
12 217 
13 228 
14 230 
15 243 
16 246 
17 248 
18 248
19 248 
20 248 
21 248 
22 248 
23 248 
24 248 
25 248 
26 248 
27 248
};
\addlegendentry{Empirical Data}
\addplot [thick, color1, mark=*, mark size=.5, mark options={solid}]
table {%
1 1.     
2 5.159 
3 16.584 
4 24.369 
5 34.403 
6 40.183 
7 44.988 
8 48.926 
9 51.79
10 54.017 
11 55.686 
12 56.809 
13 57.679 
14 58.238 
15 58.582 
16 58.801 
17 58.921 
18 58.975 
19 58.989
20 58.992
};
\addlegendentry{Linear Threshold ($\alpha=0$)}
\addplot [thick, color2, mark=*, mark size=.5, mark options={solid}]
table {%
1 1.      
2 5.166  
3 17.515  
4 28.308  
5 48.044  
6 67.203  
7 90.074 
8 116.533
9 144.581 
10 169.924 
11 188.315 
12 201.198 
13 208.501 
14 212.734 
15 214.764 
16 215.688 
17 216.079
18 216.157 
19 216.225 
20 216.261 
21 216.29  
22 216.296 
23 216.296 
24 216.296 
25 216.296 
26 216.296
};
\addlegendentry{Pressure Threshold ($\alpha=0.2$)}
\addplot [thick, color3, mark=*, mark size=.5, mark options={solid}]
table {%
1 1.      
2 5.141  
3 18.608  
4 33.758  
5 73.369 
6 124.582 
7 183.183 
8 227.556
9 245.352 
10 247.31  
11 247.464 
12 247.466 
13 247.466 
14 247.466 
15 247.466 
16 247.466 
17 247.466
18 247.466 
19 247.466 
20 247.466
};
\addlegendentry{Pressure Threshold ($\alpha=0.5$)}
\end{semilogyaxis}
\end{tikzpicture}
}
\resizebox{.48\columnwidth}{!}{%
    \begin{tikzpicture}
\definecolor{color0}{rgb}{0.12156862745098,0.466666666666667,0.705882352941177}
\definecolor{color1}{rgb}{1,0.498039215686275,0.0549019607843137}
\definecolor{color2}{rgb}{0.172549019607843,0.627450980392157,0.172549019607843}
\definecolor{color3}{rgb}{0.32,0.135,0.32}

\begin{semilogyaxis}[
legend cell align={left},
reverse legend,
legend style={
  fill opacity=0.8,
  draw opacity=1,
  text opacity=1,
  at={(0.97,0.25)},
  anchor=east,
  draw=white!80!black
},
tick align=outside,
tick pos=left,
x grid style={white!69.0196078431373!black},
xlabel={Time \(\displaystyle t\)},
xmajorgrids,
xmin=0, xmax=20,
xtick style={color=black},
y grid style={white!69.0196078431373!black},
ylabel={Influence},
ymajorgrids,
ymin=0, ymax=300,
ytick style={color=black}
]
\addplot [thick, color0, mark=*, mark size=.5, mark options={solid}]
table {%
1 1   
2 2   
3 3  
4 11  
5 18  
6 49  
7 55  
8 62  
9 85 
10 183 
11 202 
12 212 
13 224 
14 231 
15 234 
16 242 
17 246 
18 250
19 250
20 250
21 250
};
\addlegendentry{Empirical Data}
\addplot [thick, color1, mark=*, mark size=.5, mark options={solid}]
table {%
1 1.     
2 1.     
3 8.    
4 11.031 
5 12.27  
6 12.347 
7 12.446 
8 12.525 
9 12.571 
10 12.609
11 12.663 
12 12.709 
13 12.749 
14 12.77  
15 12.786 
16 12.786 
17 12.786 
18 12.786
19 12.786
20 12.786
21 12.786
};
\addlegendentry{Linear Threshold ($\alpha=0$)}
\addplot [thick, color2, mark=*, mark size=.5, mark options={solid}]
table {%
1 1.     
2 1.     
3 8.    
4 11.677 
5 13.954 
6 14.788 
7 16.8   
8 19.595 
9 22.938 
10 26.94
11 30.285 
12 32.612 
13 34.541 
14 35.886 
15 36.782 
16 37.375 
17 37.802 
18 38.079 
19 38.203 
20 38.247
21 38.249
};
\addlegendentry{Pressure Threshold ($\alpha=0.2$)}
\addplot [thick, color3, mark=*, mark size=.5, mark options={solid}]
table {%
1 1.      
2 1.      
3 8.     
4 12.51  
5 16.742  
6 22.458  
7 43.313  
8 80.304 
9 117.46
10 146.735 
11 168.006 
12 180.146 
13 185.24  
14 187.242 
15 187.995 
16 188.156 
17 188.177 
18 188.189
19 188.189
20 188.189
21 188.189
};
\addlegendentry{Pressure Threshold ($\alpha=0.5$)}
\end{semilogyaxis}
\end{tikzpicture}
}
\caption{Predicted and observed influence over time for representative cascades.}
\label{fig:Cascades}
\end{figure}

\begin{figure}[h!]
\centering
\resizebox{0.48\columnwidth}{!}{%
    \begin{tikzpicture}
\definecolor{color0}{rgb}{0.12156862745098,0.466666666666667,0.705882352941177}
\definecolor{color1}{rgb}{1,0.498039215686275,0.0549019607843137}
\definecolor{color2}{rgb}{0.172549019607843,0.627450980392157,0.172549019607843}
\begin{axis}[
legend cell align={left},
reverse legend,
legend style={
  fill opacity=0.8,
  draw opacity=1,
  text opacity=1,
  at={(0.97,0.25)},
  anchor=east,
  draw=white!80!black
},
tick align=outside,
tick pos=left,
x grid style={white!69.0196078431373!black},
xlabel={Alpha \(\alpha\)},
xmajorgrids,
xmin=0, xmax=1.5,
xtick style={color=black},
y grid style={white!69.0196078431373!black},
ylabel={Loss},
ymajorgrids,
ymin=0, ymax=0.8,
ytick style={color=black}
]
\addplot [thick, color0, mark=*, mark size=.5, mark options={solid}]
table {%
0 0.7313
0.05 0.6397
0.1 0.5397
0.15 0.4549
0.2 0.3829
0.25 0.3254
0.3 0.2920
0.35 0.2582
0.4 0.2350
0.45 0.2181
0.5 0.2092
0.55 0.2016
0.6 0.1918
0.65 0.1850
0.7 0.1793
0.75 0.1760
0.8 0.1731
0.85 0.1706
0.9 0.1676
0.95 0.1673
1 0.1651
1.05 0.1653
1.1 0.1648
1.15 0.1636
1.2 0.1639
1.25 0.1628
1.3 0.1610
1.35 0.1625
1.4 0.1607
1.45 0.1590
1.5 0.1591
};
\end{axis}
\end{tikzpicture}}
\caption{Average final NMSE as a function of the reinforcement parameter $\alpha$.}
\label{fig:EV}
\end{figure}

Figure~\ref{fig:Cascades} compares predicted and observed activation trajectories for representative cascades. The PT model consistently tracks the empirical diffusion curve more closely than LT, particularly in later stages of the cascade. This improvement is consistent with PT’s ability to capture amplification arising from repeated exposure over time. In contrast, LT systematically underestimates cascade growth due to its inability to capture amplification arising from repeated exposure.

Figure~\ref{fig:EV} plots the average prediction error across cascades as a function of $\alpha$. When $\alpha = 0$, the PT model reduces to LT and exhibits the same level of error. As $\alpha$ increases, prediction accuracy improves significantly, indicating that reinforcement plays an important role in explaining observed diffusion behavior. Importantly, accuracy improves monotonically only up to a point, after which additional amplification yields diminishing predictive returns.

The error curve stabilizes for larger values of $\alpha$, suggesting diminishing returns from additional amplification and identifying a regime in which the model is well calibrated to the data.

\paragraph{Interpretation.}
These results provide strong empirical support for the relevance of pressure-based reinforcement in real-world diffusion. By allowing influence strength to depend on social pressure at activation, the PT model captures dynamics that are not explained by static threshold models.

From a modeling perspective, the findings highlight the importance of incorporating reinforcement effects when analyzing diffusion in settings characterized by repeated exposure and social validation. From a decision-making perspective, improved predictive accuracy translates directly into better-informed intervention strategies, particularly in high-activity social systems where amplification is prominent.

These improvements in predictive accuracy directly enhance the reliability of reinforcement-aware intervention strategies derived from the PT model.

\vspace{.1in} \section{Managerial and Societal Implications}
\label{sec:implications}

Our findings have direct implications for organizations and policymakers that rely on network-based interventions. In many real-world settings, such as marketing campaigns, political communication, and online information diffusion, individual behavior is shaped not only by exposure but also by the intensity and consistency of social reinforcement. The Pressure Threshold (PT) model provides a framework for incorporating this mechanism into both prediction and decision-making. In particular, it highlights how accounting for reinforcement can lead to systematically different intervention strategies compared to classical diffusion models.

\paragraph{Implications for intervention design.}
A central insight of our analysis is that reinforcement fundamentally alters which individuals should be targeted in influence campaigns. While traditional strategies emphasize high-degree or high-centrality nodes, our results show that under reinforcement, optimal interventions prioritize nodes embedded in structurally cohesive regions of the network. These regions enable repeated exposure, increasing the likelihood of high-pressure activation and subsequent amplification. This behavior directly reflects the complementarity effects identified in Section~\ref{sec:decision}.

For practitioners, this implies that effective interventions should balance reach with reinforcement potential. Rather than focusing solely on globally influential individuals, it may be more effective to target clusters or communities where coordinated activation can generate compounding effects. This shift has implications for marketing strategies, viral content design, and grassroots mobilization efforts. In practice, this may require coordinated targeting strategies rather than isolated seeding decisions.

\paragraph{Implications for resource allocation.}
Our experiments demonstrate that the benefits of accounting for reinforcement depend strongly on network structure. In dense and highly clustered networks, reinforcement substantially increases diffusion outcomes, suggesting that more sophisticated models such as PT can significantly improve the return on investment for seeding strategies. In contrast, in sparse or weakly connected networks, the gains from reinforcement are more limited, and simpler models such as LT may suffice.

This suggests a context-dependent approach to model selection: organizations should evaluate the structural properties of the target network before committing to more complex intervention strategies. In environments where repeated exposure is common, neglecting reinforcement may lead to systematic underestimation of achievable influence.

\paragraph{Implications for platform design and governance.}
Beyond strategic interventions, our results also shed light on broader societal dynamics. Reinforcement mechanisms contribute to phenomena such as echo chambers, rapid information cascades, and the amplification of misinformation. By capturing how local pressure translates into increased influence, the PT model provides a tool for analyzing how platform design choices, such as recommendation algorithms or content exposure patterns, can shape aggregate diffusion outcomes by influencing both exposure patterns and the accumulation of social pressure.

From a policy perspective, incorporating reinforcement into diffusion models enables more realistic evaluation of interventions aimed at promoting beneficial behaviors or mitigating harmful ones. This includes applications in public health campaigns, information integrity efforts, and the regulation of online platforms.

\paragraph{Summary.}
Overall, our results highlight that influence in social networks is not solely a function of exposure, but also of reinforcement-driven conviction. Accounting for this mechanism can improve both the effectiveness of targeted interventions and the accuracy of system-level predictions in networked environments. This perspective provides a unifying lens for understanding how local interactions scale into system-level diffusion outcomes.

\vspace{.1in} \section{Conclusion}
\label{sec:conclusion}

\subsection{Summary of Contributions}

This paper studies how social reinforcement shapes diffusion dynamics and the design of network-based interventions. We introduced the Pressure Threshold (PT) model, a diffusion framework that extends the Linear Threshold (LT) model by allowing a node’s outgoing influence to depend on the social pressure that triggered its activation. This mechanism captures the formation of conviction under peer influence and the resulting amplification of downstream persuasion.

We show that incorporating reinforcement fundamentally alters the structure of influence maximization. While the problem remains NP-hard, the PT influence function is neither monotone nor submodular when $\alpha > 0$, implying that classical approximation guarantees do not directly apply. This highlights a fundamental shift from separable to interaction-driven optimization. To address this challenge, we develop a sandwich approximation framework that bounds the PT objective between tractable submodular surrogates, enabling greedy algorithms with data-dependent performance guarantees.

Our computational experiments demonstrate that reinforcement yields qualitatively different seeding strategies and substantially larger diffusion outcomes relative to LT. In particular, we show that PT favors nodes embedded in structurally cohesive regions, where repeated exposure enables amplification, rather than simply high-degree or high-centrality nodes. We further show that the benefits of reinforcement depend strongly on network structure, with significantly larger gains in dense and highly clustered networks and more modest improvements in sparse settings. These findings consistently illustrate the behavioral and structural consequences of reinforcement across both synthetic and real-world settings.

We complement these findings with an empirical evaluation of the sandwich approximation strategy, demonstrating that strong data-dependent guarantees can be achieved in practice despite the loss of classical structure. On a real-world network, the resulting solution achieves a guarantee of $56\%$ of the optimal influence, approaching the performance levels associated with submodular objectives.

Finally, using time-stamped diffusion traces from the Higgs dataset, we provide empirical validation that the PT model improves predictive accuracy relative to LT. These results confirm that pressure-based amplification is not only a theoretical construct but also a practically relevant mechanism for modeling real-world diffusion with measurable impact on both predictive accuracy and intervention effectiveness.

\subsection{Future Research}

Several directions for future research emerge from this work. On the theoretical side, an important question is to characterize regimes in which the PT influence function exhibits approximate monotonicity or submodularity, enabling stronger guarantees for greedy optimization. Identifying structural or parameter-dependent conditions under which classical 
approximation guarantees can be recovered remains an open challenge. Such characterization would clarify when reinforcement preserves enough structure to support efficient optimization.

On the empirical side, estimating the reinforcement parameter $\alpha$ from large-scale diffusion data is a key direction. Such calibration would enable the PT model to serve as a predictive tool tailored to specific domains and platforms and would enable reinforcement-aware models to be deployed as predictive tools in operational settings. Extending the model to incorporate heterogeneous sensitivity to social pressure, temporal dynamics, or evolving network structure also represents a promising avenue for future work.

Overall, our results highlight that influence in social networks is not only a function of exposure, but also of conviction formed through reinforcement. Accounting for this mechanism is essential for both understanding and effectively shaping diffusion in modern networked systems.

\vspace{.1in}
\section*{Ethical Statement}
\label{sec:ethical_statement}

The Pressure Threshold (PT) model provides a framework for understanding and optimizing information diffusion in social networks. However, such methods also have the potential for misuse in influence operations, targeted persuasion, or the amplification of misinformation.

We therefore emphasize the dual-use nature of reinforcement-aware diffusion models. When applied in real-world settings, particularly in public communication, marketing, or platform design, intervention strategies based on amplification and seed selection should be evaluated carefully for unintended consequences, including manipulation, polarization, and inequitable impact across user groups. Mechanisms for transparency, auditing, and fairness-aware design are important safeguards for responsible deployment.

All data used in this study are publicly available and anonymized. No personally identifiable information was collected or processed. 

We encourage future work to develop reinforcement-aware diffusion models that incorporate fairness, accountability, and transparency by design.

\vspace{.1in}

\section*{Acknowledgments} \label{sec:acknowledgments}
A preliminary version of this paper 
appeared at the
20th International AAAI Conference on Web and Social Media (ICWSM), 2026, held in Los Angeles, CA \cite{stutsman2026pressure}. The present manuscript substantially extends and revises that earlier version.


%
%
%


\vspace{.1in}


\bibliographystyle{informs2014} 
\bibliography{refs} 






  



\end{document}